\numberwithin{equation}{section}
\theoremstyle{plain}
\newtheorem{thm}{Theorem}[section]
\newtheorem{prop}{Proposition}[section]
\theoremstyle{remark}
\newtheorem{rem}{Remark}[section]
\newtheorem{rems}[rem]{Remarks}
\theoremstyle{definition}
\newtheorem*{matrixrhp*}{Matrix RH problem}
\providecommand{\BS}[1]{\boldsymbol{#1}}
\providecommand{\D}{\mathbb}
\newcommand{\dd}{\mathrm{d}}
\newcommand{\ee}{\mathrm{e}}
\newcommand{\ii}{\mathrm{i}}
\providecommand{\abs}[1]{\lvert#1\rvert}
\providecommand{\accol}[1]{\lbrace#1\rbrace}
\providecommand{\croch}[1]{\lbrack#1\rbrack}
\renewcommand{\Im}{\operatorname{Im}}
\renewcommand{\Re}{\operatorname{Re}}
\DeclareMathOperator{\Ai}{Ai}
\DeclareMathOperator{\diag}{diag}
\DeclareMathOperator{\ord}{O}
\DeclareMathOperator{\osmall}{o}
\DeclareMathOperator{\one}{I}
\DeclareMathOperator{\Res}{Res}
\DeclareRobustCommand{\vcheck}{\accentset{\,\vee}}
\begin{document}
\title[Ostrovsky--Vakhnenko equation by a Riemann--Hilbert approach]{The Ostrovsky--Vakhnenko equation\\by a Riemann--Hilbert approach}
\author[A.~Boutet de Monvel]{Anne Boutet de Monvel$^{\ast}$}
\author[D.~Shepelsky]{Dmitry Shepelsky$^{\dagger}$}
\address{$^{\ast}$%
Institut de Math\'ematiques de Jussieu-PRG,
Universit\'e Paris Diderot,
case 7012, b\^at. Sophie Germain, 
75205 Paris Cedex 13, France}
\email{aboutet@math.jussieu.fr}
\address{$^{\dagger}$%
Mathematical Division,
Institute for Low Temperature Physics,
47 Lenin Avenue,
61103 Kharkiv,
Ukraine}
\email{shepelsky@yahoo.com}
\subjclass[2010]{Primary: 35Q53; Secondary: 37K15, 35Q15, 35B40, 35Q51, 37K40}
\keywords{Vakhnenko equation, inverse scattering transform, Riemann--Hilbert problem, asymptotics}
\date{}
\begin{abstract}
We present an inverse scattering transform approach for the equation 
\[
u_{txx}-3 u_x+3u_xu_{xx}+uu_{xxx}=0.
\]
This equation can be viewed as the short wave model for the Degasperis\textendash Procesi equation or the differentiated Ostrovsky--Vakhnenko equation. The approach is based on an associated Riemann\textendash Hilbert problem, which allows us to give a representation for the classical (smooth) solution, to get the principal term of its long time asymptotics, and also to describe loop soliton solutions.  
\end{abstract}
\maketitle
\section{Introduction}                   \label{sec:intro}

We consider the partial differential equation
\begin{equation}                             \label{sDP-om}
u_{txx}-3\kappa u_x+3u_xu_{xx}+uu_{xxx}=0,
\end{equation}
where $\kappa>0$ is a parameter and $u\equiv u(x,t)$ is real-valued. This equation stems from the short-wave limit of the Degasperis--Procesi (DP) equation \cite{DP99}, which is a 
model describing the unidirectional propagation of nonlinear \emph{shallow} water waves:
\begin{equation}                 \label{DP-om}
u_t-u_{txx}+3\kappa u_x+4uu_x=3u_xu_{xx}+uu_{xxx}.
\end{equation}
Indeed, introducing new space and time variables $(x',t')$ and a scaling of $u$ by
\[
x'=\frac{x}{\varepsilon},\qquad t'=\varepsilon t,\qquad u'=\frac{u}{\varepsilon^2}
\]
where $\varepsilon$ is a small positive parameter, then \eqref{sDP-om} is the leading term of \eqref{DP-om} as $\varepsilon\to 0$. Thus, the equation \eqref{sDP-om} can be named as ``the short wave model for the Degasperis\textendash Procesi equation''.

Interestingly, equation \eqref{sDP-om} arises also in the theory of propagation of surface waves in \emph{deep} water, see \cite{KLM11}, as an asymptotic model for small-aspect-ratio waves.

For $\kappa=0$, \eqref{sDP-om} reduces to the derivative Burgers equation
\[
(u_t+u u_x)_{xx}=0,
\]
whereas for $\kappa=-1/3$, \eqref{sDP-om} reduces to the (differentiated) Vakhnenko equation \cites{P93,V92}
\begin{equation}                 \label{Va}
(u_t+ u u_x)_x + u = 0.
\end{equation}
Alternatively, \eqref{sDP-om} with $\kappa=1/3$ reduces to \eqref{Va} after the change of variables $(u,t) \mapsto (-u,-t)$. 

Equation \eqref{Va} arises (and is known as the ``Vakhnenko equation'') in the context of propagation of high-frequency waves in a relaxing medium \cites{V92,V97}. On the other hand, being written in the form
\begin{equation}                 \label{RO}
(u_t+ c_0 u_x + \alpha u u_x)_x  = \gamma u,
\end{equation}
it is also called the ``reduced Ostrovsky equation'' \cite{S06}: it corresponds, in the case $\beta=0$, to the equation
\begin{equation}                 \label{O}
(u_t+ c_0u_x+\alpha uu_x+\beta u_{xxx})_x=\gamma u
\end{equation}
that was derived by Ostrovsky in 1978 \cite{O78}, in the study of weakly nonlinear surface and internal waves in a rotating ocean influenced by Earth rotation. Therefore, it is more correct to name \eqref{Va} the ``Ostrovsky--Vakhnenko (OV) equation'', as it is proposed in \cite{BS13}.

Equation \eqref{O} is also known as the ``Rotation-Modified KdV equation'' (RMKdV); see, e.g., \cites{B05,BC02} and the literature cited there. The term $\beta u_{xxx}$ reflects a small-scale dispersion whereas the term $\gamma u$ (kept in the reduced form \eqref{RO} of the equation) is responsible for a large-scale dispersion due to the influence of Earth rotation (the Coriolis dispersion). In \cite{H90}, Hunter noted that equation \eqref{RO} 
in the form 
\begin{equation}                 \label{H}
(u_t+ u u_x)_x - u = 0
\end{equation}
(which corresponds to $\gamma=-1$ in \eqref{RO}) arises as a short-wave limit of the RMKdV equation and, more generally, it is the canonical asymptotic equation for genuinely nonlinear waves that are non-dispersive as their wavelength tends to zero. This justifies the terminology ``Ostrovsky--Hunter equation'', which is also used for \eqref{H} (see, e.g., \cite{B05}). Again notice that the change of variables $u\to -u, t\to -t$ transforms \eqref{H} to \eqref{Va}.

Equation \eqref{Va} can be reduced to the ``Bullough--Dodd--Mikhailov equation'' \cites{DB77,M79}, see \cites{KLM11, GHJ12}. The transformation of \eqref{Va} to the ``Caudry--Dodd--Gibbon--Sawaga--Kotega equation'' \cites{SK74,CDG76,FO82} is presented in \cite{BS13}.

Well-posedness of the Cauchy problem for the Ostrovsky equation and its relatives (reduced Ostrovsky equation, generalized Ostrovsky equation, etc.) in Sobolev spaces has been widely studied in the literature, using PDE techniques; see \cites{VL04, LM06, SSK10, D13, KM11}. 

On the other hand, equation \eqref{sDP-om} is (at least, formally) integrable: it possesses a Lax pair representation
\begin{subequations}    \label{Lax-ini}
\begin{align}     \label{Lax-x}
\psi_{xxx}&=\lambda(-u_{xx}+\kappa)\psi,\\
\label{Lax-t}
\psi_t&=\frac{1}{\lambda}\psi_{xx}-u \psi_x+u_x \psi,
\end{align}
\end{subequations}
where $\psi\equiv\psi(x,t,\lambda)$. 

In \cite{VP98} the authors introduced a change of variables
\begin{equation}                 \label{u-W}
u(x,t)=U(X,T)=W_X(X,T),\quad x=x_0+T+W(X,T),\quad t=X,
\end{equation}
which reduces \eqref{Va}
to the so-called ``transformed Vakhnenko equation''
\begin{equation}                 \label{W}
W_{XXT}+(1+W_T)W_X = 0.
\end{equation}
These variables turned out to be convenient for applying Hirota's method for constructing 
exact soliton solutions to \eqref{Va} \cites{VP98, MPV99, W10}. These solutions are multi-valued functions having the form of a loop ($1$-soliton) or many loops (multi-solitons).

Another approach to deriving formulas for multi-loop solutions of \eqref{sDP-om} was proposed in \cite{M06}, where these solutions were obtained by taking a scaling limit in the Hirota-type formulas for the  multi-soliton solution of the Degasperis--Procesi equation.

In \cite{VP02} a formalism of the inverse scattering method has been applied to \eqref{W},
which allowed 
to show that the loop solitons can be associated, in a standard way, with the eigenvalues of the 
$X$-equation of the Lax pair associated with \eqref{W}, assuming that the spectral functions
associated with the continuous spectrum do not contribute to the solution
of the inverse spectral problem for the $X$-equation (zero reflection) and thus
the inverse problem can be solved explicitly by linear algebra. The $X$-equation
has the form of a $3\times 3$ matrix ODE 
\[
\Psi_X=(A(\zeta)+B(\zeta,X))\Psi.
\]

Recently \cite{VP12}, this idea has been applied to the case with singular (point) contribution from the continuous spectrum, which allowed to construct particular singular  periodic solutions and solutions combining a singular periodic wave and a soliton. 

In this respect, we notice that the new variable $X$ is in fact the original time variable $t$, and thus the formalism of the inverse scattering method is applied in \cites{VP02,VP12} in such a way that the $t$-equation (in the original variables) is used as spectral problem whereas the $x$-equation provides the ``evolution'' of the spectral data in $x$.

Now we notice that the change of variables \eqref{u-W} is actually the same as we use in the present paper, see Section 2: in our notations, $T=y$, $X=t$, and $W=N$. But we are working with these variables in a different (in a sense, opposite) way. Namely, we keep $t$ as the evolution variable whereas we use $y$ as a space-type variable. This approach allows us to treat the initial value problem for \eqref{sDP-om} (or, equivalently, \eqref{Va}) in a general setting consistent with the natural physical sense of variables: the initial data is a function of $x$, and we are interested in their evolution in $t$
(the natural time variable). 

In this paper we present a Riemann--Hilbert (RH) approach to equation \eqref{sDP-om}, which is based directly on the Lax pair \eqref{Lax-ini} in the form of a pair of $3\times 3$ matrix ODEs. In Section~\ref{sec:rh.formalism} we develop a RH formalism for the OV equation. Particularly, we apply this formalism to the study of the Cauchy problem on the line $x\in(-\infty,\infty)$, assuming that the initial data $u(x,0)=u_0(x)$ are smooth, decay sufficiently fast as $\abs{x}\to\infty$, and satisfy $-u_{0xx}+\kappa >0$ for all $x$. In Section~\ref{sec:loop} we show how the loop solitons can be retrieved in the framework of our Riemann--Hilbert approach. The long time asymptotics of the solution of the Cauchy problem \eqref{icc} is discussed in Section~\ref{sec:as}.

\section{Riemann--Hilbert formalism}   \label{sec:rh.formalism}

Without loss of generality, in what follows we assume that $\kappa=1$. We consider the Cauchy problem
\begin{subequations}   \label{icc} 
\begin{alignat}{2}          \label{ic}
&u_{txx}-3 u_x+3u_xu_{xx}+uu_{xxx}=0,&\qquad&x\in(-\infty,\infty),\ t>0, \\ 
\label{ini-cond}
& u(x,0)=u_0(x),&&x\in(-\infty,\infty),
\end{alignat}
\end{subequations}
where 
\begin{enumerate}[(i)]
\item
$u_0(x)$ is smooth,
\item
$u_0(x)$ decays sufficiently fast as $|x|\to\infty$,
\item
$-u_{0xx}(x)+1 >0$ for all $x$.
\end{enumerate}
In analogy with the DP equation and the Camassa--Holm (CH) equation, one can show that $-u_{xx}(x,t)+1 >0$ for all $(x,t)$ \cite{C01}. The role of this condition in the integrability of equation (\ref{ic}) is discussed in \cite{GHJ12}.

As we have mentioned in Introduction, the problem of existence and uniqueness of solutions for this Cauchy problem can be successfully studied using PDE techniques. On the other hand, it is the inverse scattering methods that show their high efficiency in studying important properties of the solution --- first of all, its long-time behavior \cites{DZ93,BmKST09,BmS08b,BSZ11}. Following this general approach, we propose an inverse scattering formalism, where the solution is represented in terms of the solution of an associated Riemann--Hilbert problem in the complex plane of the spectral parameter. In the case of the DP equation, a similar approach was developed in \cite{BmS13}. 

Similarly to the case of the Degasperis--Procesi equation \cite{BmS13}, it is convenient to introduce the inverse scattering formalism for the Lax pair in the form of a system of first order, $3\times 3$ matrix-valued linear equations, which allows a good control on the behavior of dedicated  solutions of this system as functions of the spectral parameter. 

\subsection{Lax pairs}

Let $z$ be the spectral parameter defined by $\lambda=z^3$. The coefficients of the original Lax pair \eqref{Lax-ini} have singularities at $z=\infty$ and also at $z=0$. In order to have a good control on the behavior of eigenfunctions at $z=\infty$ and at $z=0$ we introduce new forms of \eqref{Lax-ini}, the first one appropriate at $z=\infty$, the second one at $z=0$. 

Introduce 
\begin{equation}\label{La}
\Lambda(z) \equiv \begin{pmatrix}
\lambda_1 & 0 & 0 \\
0 & \lambda_2 & 0 \\
0 & 0 & \lambda_3
\end{pmatrix}=z\begin{pmatrix}
\omega & 0 & 0 \\
0 & \omega^2 & 0 \\
0 & 0 & 1
\end{pmatrix} \equiv z \hat \Lambda,
\end{equation}
where $\omega=\ee^{2\pi\ii /3}$, $\lambda_j=z \omega^j$, $j=1,2,3$, and 
$\hat\Lambda=\diag\{\omega,\omega^2,1\}$.

\begin{prop}[$1$st $3\times 3$ Lax pair] \label{prop:lax1}
The sDP equation \eqref{ic} is the compatibility condition of the system of $3\times 3$ linear equations:
\begin{subequations}   \label{Lax1}
\begin{align}\label{Lax1U}
&\tilde\Phi_x - q \Lambda(z)  \tilde\Phi = U\tilde\Phi, \\ 
\label{Lax1V}
&\tilde\Phi_t +\left(u q \Lambda(z) - \Lambda^{-1}(z)\right)\tilde\Phi = V\tilde\Phi,
\end{align}
\end{subequations}
where $\tilde\Phi\equiv\tilde\Phi(x,t,z)$, 
\begin{subequations}   \label{UV}
\begin{align} \label{m1}
q&=(-u_{xx}+1)^{1/3},\\                       
\label{U}
U&=\frac{q_x}{3 q}\begin{pmatrix}
0 & 1-\omega^2 & 1-\omega \\
1-\omega & 0 & 1-\omega^2 \\
1-\omega^2 & 1-\omega & 0
\end{pmatrix}
 \\
\label{V}
V&=-u U 
+
\frac{1}{3z}\left\{
3\left(\frac{1}{q}-1\right)
I +
 \left(q^2-\frac{1}{q}\right)\begin{pmatrix}
1 & 1 & 1 \\
1 & 1 & 1 \\
1 & 1 & 1
\end{pmatrix}
\right\}
\begin{pmatrix}
\omega^2 & 0 & 0 \\
0& \omega & 0 \\
0 & 0 & 1
\end{pmatrix}.
\end{align}
\end{subequations}
Here $I$ is the $3\times 3$ identity matrix.
\end{prop}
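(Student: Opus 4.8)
The plan is to verify the zero-curvature (compatibility) condition for \eqref{Lax1} directly. Rewriting \eqref{Lax1U}--\eqref{Lax1V} as $\tilde\Phi_x=\C U\tilde\Phi$ and $\tilde\Phi_t=\C V\tilde\Phi$ with
\[
\C U=q\Lambda(z)+U,\qquad \C V=-uq\Lambda(z)+\Lambda^{-1}(z)+V,
\]
compatibility $\tilde\Phi_{xt}=\tilde\Phi_{tx}$ is equivalent to
\[
\C U_t-\C V_x+[\C U,\C V]=0.
\]
First I would substitute $\Lambda(z)=z\hat\Lambda$ and record that $\Lambda^{-1}(z)=z^{-1}\hat\Lambda^{-1}$ with $\hat\Lambda^{-1}=\diag\{\omega^{2},\omega,1\}$, which is precisely the diagonal factor appearing in \eqref{V}. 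Then $\C U$ is a polynomial of degree one in $z$ and $\C V$ a Laurent polynomial supported on $z^{1},z^{0},z^{-1}$, so the identity above splits into one matrix equation for each power of $z$; the proposition reduces to analysing these.

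Next I would collect coefficients power by power. The order $z^{2}$ comes only from $[q\Lambda,-uq\Lambda]=-uq^{2}z^{2}[\hat\Lambda,\hat\Lambda]$ and vanishes automatically. At order $z^{1}$ the two commutator contributions $[q\hat\Lambda,-uU]$ and $[U,-uq\hat\Lambda]$ cancel, and the surviving part is the scalar relation $\bigl(q_t+(uq)_x\bigr)\hat\Lambda=0$, i.e. $q_t+u_xq+uq_x=0$. Using \eqref{m1} in the form $q^{3}=-u_{xx}+1$, so that $3q^{2}q_t=-u_{xxt}$ and $3q^{2}q_x=-u_{xxx}$, multiplication by $3q^{2}$ turns this into $-u_{xxt}+3u_x(-u_{xx}+1)-uu_{xxx}=0$, which is exactly \eqref{ic}. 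Thus the $z^{1}$-coefficient alone reproduces the sDP equation, and this is the heart of the statement.

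It then remains to check that the coefficients of $z^{0}$ and $z^{-1}$ also vanish. Setting $V_{-1}=\bigl(\tfrac{1}{q}I+\tfrac{1}{3}(q^{2}-\tfrac{1}{q})E\bigr)\hat\Lambda^{-1}$ for the $z^{-1}$-coefficient of $\Lambda^{-1}(z)+V$, with $E$ the $3\times3$ all-ones matrix, the $z^{0}$-equation becomes $U_t+(uU)_x+q[\hat\Lambda,V_{-1}]=0$ (the term $[U,-uU]$ vanishing) and the $z^{-1}$-equation becomes $-V_{-1,x}+[U,V_{-1}]=0$. I would verify that, once the explicit forms are inserted, both reduce to identities in $q$ and its derivatives, holding on account of $q^{3}=-u_{xx}+1$ together with the relation $q_t+(uq)_x=0$ just obtained.

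I expect this last step to be the main obstacle. It is not conceptually deep but requires systematic use of $\omega^{3}=1$, $1+\omega+\omega^{2}=0$ and of the commutator rules relating $\hat\Lambda$, $E$ and the circulant-type matrix $U$ of \eqref{U}; the point of the particular choice \eqref{La} is exactly that all these matrices are simultaneously diagonalized by the Vandermonde matrix of the powers $\omega^{j}$, so that $[\hat\Lambda,V_{-1}]$ reduces to a multiple of $U$ and the $z^{0}$, $z^{-1}$ equations can be matched in closed form. A more conceptual alternative would be to derive \eqref{Lax1} from the scalar pair \eqref{Lax-ini}: passing to the companion system for the column vector $(\psi,\psi_x,\psi_{xx})$, whose $x$-part has characteristic roots $zq\omega^{j}$, and applying the $z$-independent gauge transformation diagonalizing its leading coefficient, would produce \eqref{U}--\eqref{V} directly and render the $z^{0}$ and $z^{-1}$ identities automatic, at the cost of carrying the transformation through explicitly.
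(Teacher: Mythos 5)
Your plan is correct, but it takes a genuinely different route from the paper. The paper's own proof is precisely the ``more conceptual alternative'' of your closing sentence: it rewrites the scalar pair \eqref{Lax-ini} as the companion system \eqref{Lax1-1} for $\Phi=(\psi,\psi_x,\psi_{xx})^T$ and obtains \eqref{Lax1}--\eqref{UV} by the gauge transformation $\tilde\Phi=P^{-1}D^{-1}\Phi$ with $D=\diag\{q^{-1},1,q\}$ and $P$ the Vandermonde matrix on $\lambda_j=z\omega^j$ (see \eqref{D-P}), so that compatibility is inherited from \eqref{Lax-ini}; note this transformation is \emph{not} $z$-independent, contrary to your parenthetical description --- only the factor $D$ is. Your direct expansion of the zero-curvature condition in powers of $z$ is a legitimate, self-contained substitute, and your bookkeeping is accurate: the $z^2$ term vanishes trivially, the $z^1$ term is exactly $(q_t+(uq)_x)\hat\Lambda$ (the two order-$z$ commutator contributions do cancel) and is equivalent to \eqref{ic} upon multiplication by $3q^2$, and the remaining equations are the ones you wrote. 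The part you left unexecuted does work out, and in the way you anticipated: writing $U=\frac{q_x}{3q}\mathcal{C}$ with $\mathcal{C}_{jl}=1-\omega^{j-l}$, one finds $q[\hat\Lambda,V_{-1}]=\frac{1-q^3}{3}\mathcal{C}=\frac{u_{xx}}{3}\mathcal{C}$, so the $z^0$ equation collapses to $\frac13\,\partial_x\bigl((q_t+(uq)_x)/q\bigr)\mathcal{C}=0$ --- not an identity, but implied by the $z^1$ relation --- while the $z^{-1}$ equation $-V_{-1,x}+[U,V_{-1}]=0$ is a genuine identity in $q$, the coefficients of $\hat\Lambda^{-1}$ and of $E\hat\Lambda^{-1}$ cancelling separately once one uses $[\mathcal{C},\hat\Lambda^{-1}]=3(E\hat\Lambda^{-1}-\hat\Lambda^{-1})$ and $[\mathcal{C},E\hat\Lambda^{-1}]=6E\hat\Lambda^{-1}$. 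As for what each approach buys: the paper's derivation explains where $U$ and $V$ come from and avoids all commutator algebra, but since the printed $V$ contains no $q_t$, matching it with the transformed $t$-equation already invokes $q_t=-(uq)_x$ (entering through $\partial_tD$), so the gauge argument most naturally yields the implication ``\eqref{ic} implies compatibility''; your expansion keeps both implications explicit, the $z^1$ coefficient alone forcing \eqref{ic}.

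One side remark in your proposal is wrong as stated: $\hat\Lambda$, $E$ and $\mathcal{C}$ are \emph{not} simultaneously diagonalized by the Vandermonde matrix --- if they were, the commutators $[\hat\Lambda,V_{-1}]$ and $[U,V_{-1}]$ on which your $z^0$ and $z^{-1}$ equations turn would vanish identically, which they do not. The correct structural fact is that conjugation by the Vandermonde (Fourier) matrix exchanges diagonal matrices with circulant ones: $\mathcal{C}$ and $E$ are circulant, $\hat\Lambda$ and $\hat\Lambda^{-1}$ are diagonal, and all the commutators needed stay in the span of $\mathcal{C}$, $\hat\Lambda^{-1}$, $E\hat\Lambda^{-1}$, which is why the computation closes in finitely many terms. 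This slip does not damage the proof, since the equations it was meant to motivate are stated correctly and, as indicated above, verify as you expected.
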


\begin{proof}
Let $\Phi\equiv\Phi(x,t,z)$ be the vector-valued function defined by
\begin{equation}\label{Phi}
\Phi
=\begin{pmatrix}\psi\\ \psi_x \\ \psi_{xx}\end{pmatrix}. 
\end{equation}
Then the Lax pair \eqref{Lax-ini} can be written in the matrix form:
\begin{subequations}   \label{Lax1-1}
\begin{align}\label{Lax1-1U}
&\Phi_x = \begin{pmatrix}
0 & 1 & 0\\
0 & 0 & 1 \\
z^3 q^3 & 0 & 0
\end{pmatrix}\Phi,\\
\label{Lax1-1V}
&\Phi_t = \begin{pmatrix}
u_x & -u & z^{-3}\\
1 & 0 & -u \\
-z^3 u q^3 & 1 & -u_x
\end{pmatrix}\Phi.
\end{align}
\end{subequations}
Introduce 
\begin{subequations}   \label{D-P}
\begin{align}\label{D-P-D}
D(x,t) &= \begin{pmatrix}
q^{-1}(x,t) & 0 & 0 \\
0 & 1 & 0 \\
0 & 0 & q(x,t)
\end{pmatrix},\\
\label{D-P-P}
P(z) &= \begin{pmatrix}
1 & 1 & 1\\
\lambda_1 & \lambda_2 & \lambda_3 \\
\lambda_1^2 & \lambda_2^2 & \lambda_3^2
\end{pmatrix}\equiv 
\begin{pmatrix}
1 & 0 & 0 \\
0 & z & 0 \\
0 & 0 & z^2
\end{pmatrix}
\begin{pmatrix}
1 & 1 & 1\\
\omega & \omega^2 & 1 \\
\omega^2 & \omega & 1
\end{pmatrix}.
\end{align} 
\end{subequations}
Setting $\tilde\Phi = P^{-1}D^{-1} \Phi$, the system \eqref{Lax1}-\eqref{UV} for $\tilde\Phi$ follows from \eqref{Lax1-1}.
\end{proof}

Notice that the Lax pair \eqref{Lax1}-\eqref{UV} is appropriate for controlling the behavior of its solutions for large $z$ because $U\equiv U(x,t,z)$ and $V\equiv V(x,t,z)$ are bounded at $z=\infty$ and, moreover, the diagonal part of $U$ vanishes identically while the diagonal part of $V$ is $\ord(1/z)$ as $z\to\infty$ (see \cites{BC,BmS08a,BmS13}). 

\begin{prop}[$2$nd $3\times 3$ Lax pair]\label{prop:lax2}
The sDP equation \eqref{ic} is the compatibility condition of the system of $3\times 3$ linear equations:
\begin{subequations} \label{Lax0}
\begin{align}\label{LaxU0}
\tilde\Phi_{0x}-\Lambda(z)\tilde\Phi_0 &= U_0\tilde\Phi_0, \\ 
\label{LaxV0}
\tilde\Phi_{0t}-\Lambda^{-1}(z)\tilde\Phi_0 &= V_0\tilde\Phi_0,
\end{align}
\end{subequations}
where $\tilde\Phi_0\equiv\tilde\Phi_0(x,t,z)$, 
\begin{subequations}  \label{UV0}
\begin{align}                    
\label{U0}
U_0&=-\frac{z u_{xx}}{3} \begin{pmatrix}
\omega & 0 & 0 \\
0 & \omega^2 & 0 \\
0 & 0 & 1
\end{pmatrix}\begin{pmatrix}
1 & 1 & 1\\
1 & 1 & 1\\
1 & 1 & 1
\end{pmatrix},\\
\label{V0}
V_0&=\frac{u_x}{3}\begin{pmatrix}
0  & 1-\omega^2 & 1-\omega \\
1-\omega & 0 & 1-\omega^2 \\
1-\omega^2 & 1-\omega & 0
\end{pmatrix} - z u \begin{pmatrix}
\omega & 0 & 0 \\
0 & \omega^2 & 0 \\
0 & 0 & 1
\end{pmatrix}\left\{I-\frac{u_{xx}}{3} \begin{pmatrix}
1 & 1 & 1\\
1 & 1 & 1\\
1 & 1 & 1
\end{pmatrix}
\right\}.
\end{align}
\end{subequations}
\end{prop}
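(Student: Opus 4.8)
The plan is to reuse the gauge-transformation idea from the proof of Proposition~\ref{prop:lax1}, but to gauge only by the $z$-dependent matrix $P(z)$ of \eqref{D-P-P}, \emph{dropping} the $(x,t)$-dependent factor $D$. Writing $P(z)=\diag(1,z,z^2)\,\Omega$, where $\Omega$ is the constant matrix in \eqref{D-P-P}, I would set $\tilde\Phi_0=P^{-1}(z)\Phi$ with $\Phi=(\psi,\psi_x,\psi_{xx})^{\top}$ the solution of the matrix Lax pair \eqref{Lax1-1}. Since $P$ does not depend on $(x,t)$, this is a gauge transformation and the zero-curvature condition is invariant; as the compatibility condition of \eqref{Lax1-1} is the sDP equation \eqref{ic}, the same will hold for the transformed system. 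Omitting $D$ is precisely what replaces the leading $x$-symbol $q\Lambda(z)$ of Proposition~\ref{prop:lax1} by the bare $\Lambda(z)=z\hat\Lambda$, the normalization adapted to $z=0$.

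For the $x$-equation I would split the coefficient matrix of \eqref{Lax1-1U} as
\[
\begin{pmatrix} 0 & 1 & 0\\ 0 & 0 & 1 \\ z^3 q^3 & 0 & 0 \end{pmatrix}=C+z^3(q^3-1)\,E,\qquad C=\begin{pmatrix} 0 & 1 & 0\\ 0 & 0 & 1 \\ z^3 & 0 & 0 \end{pmatrix},\quad E=\begin{pmatrix} 0 & 0 & 0\\ 0 & 0 & 0 \\ 1 & 0 & 0 \end{pmatrix},
\]
with $q^3-1=-u_{xx}$ by \eqref{m1}. The columns of $P(z)$ are the eigenvectors $(1,\lambda_j,\lambda_j^2)^{\top}$ of $C$, so $P^{-1}CP=\Lambda(z)$; a short computation using $1+\omega+\omega^2=0$ gives $P^{-1}EP=\tfrac{1}{3z^2}\,\hat\Lambda\,\mathbf 1$, where $\mathbf 1$ is the all-ones matrix. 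Multiplying the perturbation by $z^3(q^3-1)=-z^3u_{xx}$ then reproduces exactly $U_0$ of \eqref{U0}.

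For the $t$-equation I would peel off from the coefficient matrix $T$ of \eqref{Lax1-1V} its ``free'' part $C^{-1}=\begin{pmatrix}0&0&z^{-3}\\1&0&0\\0&1&0\end{pmatrix}$, which conjugates to the diagonal term of \eqref{LaxV0} since $P^{-1}C^{-1}P=(P^{-1}CP)^{-1}=\Lambda^{-1}(z)$. Using $q^3=1-u_{xx}$, the remainder factorizes as
\[
T-C^{-1}=u_x\,\diag(1,0,-1)-u\,C+u u_{xx}\,z^3 E .
\]
Conjugating term by term, $P^{-1}(-uC)P=-u\Lambda(z)=-zu\hat\Lambda$, the $E$-term gives $\tfrac{z u u_{xx}}{3}\hat\Lambda\,\mathbf 1$ as above, and $P^{-1}\diag(1,0,-1)P$ will turn out to be $\tfrac13$ of the off-diagonal matrix appearing in \eqref{U}. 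Collecting the three contributions yields $V_0$ in the form \eqref{V0}.

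The one step that needs care is this last conjugation. Because $\diag(1,0,-1)$ is diagonal, the $z$-scalings cancel and $P^{-1}\diag(1,0,-1)P=\Omega^{-1}\diag(1,0,-1)\,\Omega$ is $z$-independent; the content is the purely algebraic identity that this equals $\tfrac13$ of the matrix with off-diagonal entries $1-\omega$ and $1-\omega^2$. This is where sign and index bookkeeping in the cube root of unity $\omega$ is easiest to get wrong, but it is elementary once $\Omega^{-1}$ is written out. No analytic input beyond Proposition~\ref{prop:lax1} is needed, so I expect the proof to be short.
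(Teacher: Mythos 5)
Your proposal is correct and is essentially the paper's own proof: the paper likewise defines $\tilde\Phi_0=P^{-1}\Phi$ (dropping the $(x,t)$-dependent factor $D$) and reads off \eqref{Lax0}--\eqref{UV0} from \eqref{Lax1-1}, with your explicit splittings $A=C+z^3(q^3-1)E$, $T-C^{-1}=u_x\diag(1,0,-1)-uC+uu_{xx}z^3E$ and the conjugation identities $P^{-1}CP=\Lambda(z)$, $P^{-1}EP=\tfrac{1}{3z^2}\hat\Lambda\,\mathbf{1}$ all checking out. Only note that your symbol $\Omega$ for the Vandermonde factor of $P$ clashes with the paper's $\Omega$ (which denotes $\hat\Lambda$ times the all-ones matrix, appearing in \eqref{M-int3-0}), so a different letter should be used.
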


\begin{proof}
Introduce  $\tilde\Phi_0\equiv\tilde\Phi_0(x,t,z)$  by $\tilde\Phi_0=P^{-1}\Phi$. Then \eqref{Lax0}-\eqref{UV0} for $\tilde\Phi_0$ follows from \eqref{Lax1-1}.
\end{proof}

Since $U_0\equiv U_0(x,t,z)$ and $V_0\equiv V_0(x,t,z)$ are bounded at $z=0$, the Lax pair \eqref{Lax0}-\eqref{UV0} is appropriate for controlling the behavior of its solutions as $z\to 0$.

Moreover, $U_0(x,t,0)\equiv 0$ for all $(x,t)$. This property will be used below, in establishing the relationship between the solution of the associated Riemann--Hilbert problem and the solution of equation \eqref{ic}.

\subsection{Eigenfunctions}

Now assume that $u(x,t)$ is a solution of the Cauchy problem \eqref{icc} and define dedicated solutions of the systems \eqref{Lax1} and \eqref{Lax0}, taking into account that $U,V,U_0$, and $V_0$ all decay to $0$ as $|x|\to \infty$. 

The l.h.s.\ of \eqref{Lax1} suggest introducing a matrix-valued function $Q(x,t,z)$ satisfying the system of equations
\begin{align*}
Q_x&=q(x,t)\Lambda(z),\\
Q_t&=-u(x,t)q(x,t)\Lambda(z)+\Lambda^{-1}(z).
\end{align*}
Introducing the new variable 
\begin{equation}\label{y}
y(x,t)\coloneqq x-\int_x^\infty\left(q(\xi,t)-1\right)\dd\xi,
\end{equation}
so that $\partial y/\partial x= q$, one defines $Q$ by
\begin{equation}\label{Q}
Q(x,t,z) = y(x,t)\Lambda(z)+t \Lambda^{-1}(z).
\end{equation}
Notice that the equality  $Q_t=-uq\Lambda+\Lambda^{-1}$ follows from the equality 
$q_t=-(uq)_x$, which is an equivalent form of \eqref{ic}.

The role of $Q(x,t,z)$ is to catch the large-$z$ behavior of solutions of the Lax equations \eqref{Lax1}.
Indeed, introducing $M(x,t,z)$ by 
$$
M=\tilde \Phi \ee^{-Q}
$$
reduces \eqref{Lax1} to the system 
\begin{subequations}   \label{M}
\begin{align}\label{MU}
&M_x - [Q_x,M]  = UM, \\ 
\label{MV}
&M_t - [Q_t,M]  = VM,
\end{align}
\end{subequations}
(brackets denote matrix commutator), whose solutions can be constructed as solutions of the Fredholm integral equation
\begin{align} \label{M-int}
&M(x,t,z)=\notag\\
&I + \int_{(x^*,t^*)}^{(x,t)} \ee^{Q(x,t,z)-Q(\xi,\tau,z)}
	\left(  U M(\xi,\tau,z)\dd\xi +  V M(\xi,\tau,z)\dd\tau\right)\ee^{-Q(x,t,z)+Q(\xi,\tau,z)}.
\end{align}
The matrix equation \eqref{M-int} has to be understood as a collection of scalar integral equations, where the initial point of integration $(x^*,t^*)$ can be chosen differently for different matrix entries of the equation, e.g., $(x_{jl}^*,t_{jl}^*)$ for $M_{jl}$. Particularly, for the Cauchy problem considered here, it is reasonable to reduce the integration in \eqref{M-int} to paths parallel to the $x$-axis, i.e., $t_{jl}^*=t$ and to choose $x_{jl}^*=\infty_{jl}=\pm\infty$ as the initial point of integration in such a way that the exponentials in \eqref{M-int} provide analyticity and boundedness for the matrix entries. Namely, defining 
\begin{equation}\label{signs}
\infty_{jl}=
\begin{cases}
+\infty,&\text{if }\Re\lambda_j(z)\geq\Re\lambda_l(z),\\ 
-\infty, &\text{if }\Re\lambda_j(z)<\Re\lambda_l(z),
\end{cases}
\end{equation}
and taking into account that $V\to 0$ as $\abs{x}\to \infty$, equation \eqref{M-int} reduces to the system of Fredholm integral equations
\begin{equation} \label{M-int3}
M_{jl}(x,t,z)=
I_{jl}+\int_{\infty_{j,l}}^x\ee^{-\lambda_j(z)\int_x^\xi q(\zeta,t)\dd\zeta}
\croch{(UM)_{jl}(\xi,t,z)}\ee^{\lambda_l(z)\int_x^\xi q(\zeta,t) \dd\zeta}\dd\xi.
\end{equation}

Introduce the set $\Sigma=\{z\mid\Re\lambda_j(z)=\Re\lambda_l(z)\text{ for some } j\neq l\}$. By \eqref{La}, $\lambda_j=\omega^j$, hence $\Sigma$ consists of six rays
\[
l_{\nu}=\D{R}_+\ee^{\frac{\pi\ii}{3}(\nu-1)},\quad\nu=1,\dots,6
\]
and divides the $z$-plane into six sectors
\[
\Omega_{\nu} = \Bigl\lbrace z\,\Bigm\vert\,\frac{\pi}{3}(\nu-1)<\arg z < \frac{\pi}{3}\nu\Bigr\rbrace,\quad\nu=1,\dots,6.
\]

\begin{prop}[see \cite{BC}]\label{p1}
Let $M(x,t,z)$ be the solution of the system of equations \eqref{M-int3}, 
where the limits of integration are chosen according to \eqref{signs}. Then
\begin{enumerate}[\rm(i)]
\item
$M$ is piecewise meromorphic with respect to $\Sigma$, as function of the spectral parameter $z$. 
\item
$M(x,t,z)\to I$ as $x\to+\infty$ and $M(x,t,k)$ is bounded as $x\to -\infty$ for all $z\in {\mathbb C}\setminus\Sigma$ where $M$ is regular.
\item
$M(x,t,z)\to I$ as $z\to\infty$.
\end{enumerate}
\end{prop}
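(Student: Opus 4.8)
The plan is to treat the scalar equations \eqref{M-int3} entry by entry, the whole argument hinging on the sign rule \eqref{signs}. For a fixed pair $(j,l)$ the two exponentials combine (the matrix entry being a scalar) into the kernel $\ee^{-(\lambda_j(z)-\lambda_l(z))\int_x^\xi q(\zeta,t)\dd\zeta}$, whose real part is $-\bigl(\Re\lambda_j(z)-\Re\lambda_l(z)\bigr)\int_x^\xi q(\zeta,t)\dd\zeta$. Since $q=(-u_{xx}+1)^{1/3}>0$ by \eqref{m1}, the inner integral has the sign of $\xi-x$, and \eqref{signs} is chosen precisely so that along the integration ray from $\infty_{jl}$ to $x$ this real part is $\leq 0$; thus the kernel is bounded by $1$. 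First I would combine this bound with $U(\cdot,t,z)\in L^1(\D R)$ --- which holds because $u_0$, and therefore $u_{xx}$, $q-1$ and $q_x$, decay fast, so that $U=\frac{q_x}{3q}(\cdots)$ from \eqref{U} is integrable --- to run a Neumann series for each entry. This yields existence and uniqueness of $M$ and a bound $\sup_x\abs{M_{jl}(x,t,z)}\leq C$ uniform for $z$ in each closed sector.

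For part (i) I would observe that inside each sector $\Omega_\nu$ the signs $\infty_{jl}$ are constant, since they can only change across $\Sigma$, where two real parts coincide; hence the limits of integration do not move. The kernel $\ee^{-(\lambda_j-\lambda_l)\int_x^\xi q}$ is entire in $z$, and the $L^1$ bound on $U$ together with the kernel bound makes the Neumann series converge uniformly on compact subsets of $\Omega_\nu$; each iterate is therefore analytic and so is their sum. The only obstruction to analyticity is the vanishing of the relevant Fredholm determinant at isolated points, which produces poles, giving the asserted piecewise meromorphy with respect to $\Sigma$.

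For part (ii) I would argue entrywise by dominated convergence. If $\infty_{jl}=+\infty$ then $\abs{M_{jl}(x,t,z)-I_{jl}}\leq\int_x^{+\infty}\abs{U}\,\abs{M}\,\dd\xi$, which is an $L^1$-tail tending to $0$ as $x\to+\infty$; in particular the diagonal entries (for which $\infty_{jj}=+\infty$) tend to $1$. If $\infty_{jl}=-\infty$, i.e.\ $\Re\lambda_j<\Re\lambda_l$, then the integrand over $(-\infty,x)$ is dominated by $\abs{(UM)_{jl}}\in L^1$, while for each fixed $\xi$ the kernel tends to $0$ as $x\to+\infty$ because $\int_x^\xi q\to-\infty$ makes the real exponent tend to $-\infty$; hence the integral tends to $0$ and $M_{jl}\to I_{jl}$. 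This proves $M\to I$ as $x\to+\infty$. As $x\to-\infty$ the kernel is still bounded by $1$ and every entry is controlled by $\abs{I_{jl}}+\int_{\D R}\abs{U}\,\abs{M}<\infty$, which gives boundedness but, in general, no limit.

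The step I expect to be the main obstacle is part (iii). Here I would integrate by parts using $\partial_\xi\ee^{-(\lambda_j-\lambda_l)\int_x^\xi q}=-(\lambda_j-\lambda_l)\,q(\xi,t)\,\ee^{-(\lambda_j-\lambda_l)\int_x^\xi q}$: both the boundary term at $\xi=x$ and the remaining integral acquire a factor $(\lambda_j-\lambda_l)^{-1}=\ord(1/z)$, using that $U$ is smooth with integrable derivative and that $q$ is bounded away from $0$. Because the diagonal of $U$ vanishes (as noted after Proposition~\ref{prop:lax1}), the leading inhomogeneous term $\int_{\infty_{jl}}^x\ee^{-(\lambda_j-\lambda_l)\int_x^\xi q}U_{jl}\,\dd\xi$ is purely off-diagonal and $\ord(1/z)$; since for large $\abs{z}$ there are no poles and the resolvent of the integral operator is bounded, bootstrapping gives $M=I+\ord(1/z)$, hence $M\to I$ as $z\to\infty$. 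The delicate point is to keep these estimates uniform as $z$ approaches $\Sigma$, where the exponential ceases to decay although $\abs{\lambda_j-\lambda_l}$ still tends to $\infty$.
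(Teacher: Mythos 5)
The paper does not actually prove this proposition --- it defers to Beals--Coifman \cite{BC} --- and your outline follows the strategy of that reference: the sign rule \eqref{signs} makes the combined kernel $\ee^{(\lambda_l-\lambda_j)\int_x^\xi q}$ bounded by $1$ along each integration ray (your verification of this, using $q>0$, is correct), and your dominated-convergence treatment of part (ii) is sound. The first genuine gap is in your existence/analyticity mechanism. Because the three rows of a fixed column $l$ in \eqref{M-int3} are integrated from \emph{different} endpoints $\infty_{jl}$, the system is not of Volterra type but genuinely Fredholm, and its Neumann series has no reason to converge: each application of the integral operator $T_z$ gains only a factor of order $\lVert U\rVert_{L^1}$, with no factorial improvement. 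Your write-up is internally inconsistent on exactly this point: if the Neumann series converged locally uniformly in $z$, its sum would be analytic in each sector and $M$ could never have poles, contradicting the meromorphy you assert in (i) --- and contradicting Section~\ref{sec:loop}, where solitons exist precisely because solvability fails at isolated points. Likewise ``existence and uniqueness of $M$'' cannot be a consequence of a convergent Neumann series, since both fail at the exceptional points. The correct tool is the analytic Fredholm alternative: $T_z$ is compact (Arzel\`a--Ascoli, using $U\in L^1$) and analytic in $z$ in each sector, and it is invertible for large $\abs{z}$ (see below), so $(I-T_z)^{-1}$, hence $M$, is meromorphic in each sector.

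The second gap is the integration by parts in part (iii): it does not close. Integrating by parts forces you to differentiate $(UM)_{jl}$ in $\xi$, and by \eqref{MU} one has $M_\xi=[Q_\xi,M]+UM$ with $Q_\xi=zq\hat\Lambda$, so $\partial_\xi(UM)_{jl}$ contains terms of size $\ord(z)\,\abs{U}\,\abs{M}$. After multiplication by your prefactor $(\lambda_j-\lambda_l)^{-1}=\ord(1/z)$ these terms are $\ord(1)$: explicitly, they reproduce integrals of the very same form $\int\ee^{\cdots}U_{jk}M_{kl}\,\dd\xi$ with coefficients $(\omega^k-\omega^l)/(\omega^j-\omega^l)$, i.e.\ of the same size as the quantity you are trying to estimate, so no decay in $z$ is produced. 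The standard repair avoids differentiating $M$ altogether: one shows that the inhomogeneous term $\int_{\infty_{jl}}^x\ee^{\cdots}U_{jl}\,\dd\xi$ tends to $0$ uniformly in $x$ (exponential decay of the kernel in closed subsectors, Riemann--Lebesgue on and near the boundary rays, where the exponent is purely oscillatory), and that $(I-T_z)^{-1}$ stays uniformly bounded as $z\to\infty$; the latter uses the vanishing of the diagonal of $U$ --- noted after Proposition~\ref{prop:lax1} --- which makes $T_z^2$ (though not $T_z$ itself, whose kernel is $\equiv 1$ in the row $j=l$) small in norm for large $\abs{z}$. Then $M-I=(I-T_z)^{-1}\bigl(T_z[I]\bigr)\to 0$, which proves (iii), supplies the large-$z$ invertibility needed above, and settles the uniformity near $\Sigma$ that you correctly flagged as the delicate point.
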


Equations \eqref{M-int3} provide only the boundedness of $M_{j,l}(\,\cdot\,,\,\cdot\,,z)$ as $z\to 0$ in the corresponding sector $\Omega_{\nu}$.
In order to have better control of the behavior of  solutions as $z\to 0$, 
it is convenient to use the Lax pair \eqref{Lax0}. 

Introducing $M_0(x,t,z)$ by 
$$
M_0=\tilde \Phi_0 \ee^{-Q_0},
$$
where $Q_0=\ee^{x\Lambda+t\Lambda^{-1}}$, 
reduces \eqref{Lax0} to the system 
\begin{subequations}   \label{M0}
\begin{align}\label{MU0}
&M_{0x} - [Q_{0x},M_0]  = U_0 M_0, \\ 
\label{MV0}
&M_{0t} - [Q_{0t},M_0]  = V_0 M_0,
\end{align}
\end{subequations}
whose solutions, in analogy with $M$, can be constructed as solutions of the Fredholm integral equation
\begin{equation} \label{M-int3-0}
M_{0jl}(x,t,z)=
I_{jl}-\frac{z}{3}\int_{\infty_{j,l}}^x\ee^{\lambda_j(z)(x-\xi)}
\croch{(u_{xx}\Omega M_0)_{jl}(\xi,t,z)}\ee^{-\lambda_j(z)(x-\xi)}\dd\xi,
\end{equation}
where $\Omega=\left(\begin{smallmatrix}
\omega & \omega & \omega \\
\omega^2 & \omega^2 & \omega^2 \\
1 & 1 & 1
\end{smallmatrix}\right)$.

Similarly to Proposition~\ref{p1}, equation \eqref{M-int3-0} determines a piecewise meromorphic, $3\times 3$ matrix-valued function $M_0$. Moreover, the particular dependence on $z$ in \eqref{M-int3-0} implies a particular form of the first coefficients in the expansion of $M_0$ as $z\to 0$.

\begin{prop} \label{p0}
Let $M_0(x,t,z)$ be the solution of the system of equations \eqref{M-int3-0}, 
where the limits of integration are chosen according to \eqref{signs}. Then
\begin{enumerate}[\rm(i)]
\item
$M_0$ is piecewise meromorphic with respect to $\Sigma$, as function of the spectral parameter $z$. 
\item
$M_0(x,t,z)\to I$ as $x\to +\infty$ and $M_0(x,t,z)$ is bounded as $x\to -\infty$ for all $z\in {\mathbb C}\setminus\Sigma$ where $M_0$ is regular.
\item
$M_0(x,t,z)\to I$ as $z\to 0$. Moreover,
\begin{align} \label{M0-as}
&M_0(x,t,z) = I + M_0^{(1)}(x,t) z + M_0^{(2)}(x,t) z^2 + \ord(z^3)\  \text{as}\  z\to 0,
\intertext{where} 
\label{M0-coef}
&M_0^{(1)}(x,t) = -\frac{1}{3} u_x \Omega, \qquad M_0^{(2)}(x,t) = -\frac{1}{3} u \tilde\Omega,
\end{align}
with $\tilde\Omega = \hat\Lambda \Omega - \Omega \hat\Lambda$ and $\hat\Lambda = \diag\{\omega, \omega^2, 1\}$.
\end{enumerate}
\end{prop}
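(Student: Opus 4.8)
The plan is to dispatch parts~(i)--(ii) exactly as in Proposition~\ref{p1} and to spend the real effort on the small-$z$ expansion in part~(iii). For (i)--(ii) I observe that \eqref{M-int3-0} has the same structure as \eqref{M-int3}: the exponential weight $\ee^{(\lambda_j-\lambda_l)(x-\xi)}$ generated by the commutator term $[Q_{0x},M_0]_{jl}=(\lambda_j-\lambda_l)M_{0jl}$ in \eqref{MU0}, together with the choice \eqref{signs} of the lower limit $\infty_{jl}$, keeps every kernel bounded and makes $M_{0jl}(\,\cdot\,,\,\cdot\,,z)$ analytic in the sector where $\Re\lambda_j\geq\Re\lambda_l$, with meromorphic continuation across $\Sigma$; the normalization $M_0\to I$ as $x\to+\infty$ and boundedness as $x\to-\infty$ then follow verbatim from the argument of \cite{BC} used for Proposition~\ref{p1}. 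So I only sketch part~(iii).

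The key feature for (iii) is that the integrand in \eqref{M-int3-0} carries an explicit prefactor $z$ (coming from $U_0=-\tfrac{z}{3}u_{xx}\Omega$), so the Neumann series for $M_0$ is automatically graded in powers of $z$. Setting $z=0$ kills the integral, giving $M_0=I$ and hence $M_0\to I$ as $z\to0$. To extract $M_0^{(1)}$ I substitute $M_0=I+\ord(z)$ into \eqref{M-int3-0} and let the exponentials tend to $1$ (since $\lambda_j-\lambda_l=z(\omega^j-\omega^l)\to0$); this yields
\begin{equation*}
M_{0jl}^{(1)}(x,t)=-\frac{1}{3}\,\Omega_{jl}\int_{\infty_{jl}}^{x}u_{xx}(\xi,t)\,\dd\xi=-\frac{1}{3}\,\Omega_{jl}\,u_x(x,t),
\end{equation*}
where the last equality uses $u_x\to0$ at $\pm\infty$, so the value is independent of whether $\infty_{jl}=+\infty$ or $-\infty$. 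In matrix form this is exactly $M_0^{(1)}=-\tfrac13 u_x\Omega$.

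For the $z^2$ coefficient I keep the next order everywhere. Inside the integral there are two sources of an extra power of $z$: the correction $zM_0^{(1)}$ to $M_0$, and the linear term $z(\omega^j-\omega^l)(x-\xi)$ from expanding $\ee^{(\lambda_j-\lambda_l)(x-\xi)}$. The first source produces a factor $\Omega\,M_0^{(1)}\propto\Omega^2$, and here I use the crucial algebraic identity $\Omega^2=0$: indeed $\Omega$ is the rank-one matrix $(\omega,\omega^2,1)^{\R T}(1,1,1)$ and $1+\omega+\omega^2=0$. Hence only the second source survives, giving
\begin{equation*}
M_{0jl}^{(2)}(x,t)=-\frac{1}{3}(\omega^j-\omega^l)\,\Omega_{jl}\int_{\infty_{jl}}^{x}(x-\xi)\,u_{xx}(\xi,t)\,\dd\xi=-\frac{1}{3}(\omega^j-\omega^l)\,\Omega_{jl}\,u(x,t),
\end{equation*}
where one integration by parts gives $\int_{\infty_{jl}}^{x}(x-\xi)u_{xx}\,\dd\xi=u(x,t)$. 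Since $(\hat\Lambda\Omega-\Omega\hat\Lambda)_{jl}=(\omega^j-\omega^l)\Omega_{jl}=\tilde\Omega_{jl}$, this reads $M_0^{(2)}=-\tfrac13 u\,\tilde\Omega$, as claimed.

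I expect the main obstacle to be organizing and justifying this low-order expansion rather than the algebra itself: I must check that the Neumann series converges uniformly for small $z$ so that \eqref{M0-as} is legitimate with an $\ord(z^3)$ remainder, and that the boundary contribution $(x-\infty_{jl})u_x(\infty_{jl},t)$ arising in the integration by parts truly vanishes, which is exactly where hypothesis~(ii) on the fast decay of $u_0$ (hence of $u(\,\cdot\,,t)$) enters. The cancellation $\Omega^2=0$ is the one genuinely structural point: without it the $z^2$ coefficient would acquire a nonlocal double integral and would not collapse to the stated local expression $-\tfrac13 u\,\tilde\Omega$.
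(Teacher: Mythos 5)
Your proposal is correct and follows essentially the same route as the paper: the paper disposes of (i)--(ii) by the same analogy with Proposition~\ref{p1} (via \cite{BC}), and its justification of (iii) is exactly the low-order expansion of \eqref{M-int3-0} recorded in the Remark following the statement, namely that the two integrals $\int_{\pm\infty}^x u_{\xi\xi}\,\dd\xi$ and $\int_{\pm\infty}^x (x-\xi)u_{\xi\xi}\,\dd\xi$ produce $u_x$ and $u$ independently of the sign at infinity. Your only addition is to make explicit the rank-one cancellation $\Omega^2=0$ (which the paper leaves implicit), and that step checks out since $\Omega=(\omega,\omega^2,1)^{\R T}(1,1,1)$ and $1+\omega+\omega^2=0$.
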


\begin{rem}
The important property of \eqref{M0-as} is that its terms, up to the second order, do not depend on the sector of the $z$-plane where the limit is taken.  This is due to the fact that the integrals $\int_{\pm\infty}^x u_{\xi\xi}\dd\xi=-u_x$ and $\int_{\pm\infty}^x (x-\xi) u_{\xi\xi}\dd\xi = -u$, which arise when calculating the expansion, do not depend on the sign at infinity. However, the further terms in \eqref{M0-as} depend, in general, on the sector.
\end{rem}

Now, noticing that $M$ and $M_0$ are related to the same linear system of PDEs \eqref{Lax-ini}, tracing back the way that the differential equations for $M$ and $M_0$ were derived, and taking into account their behavior for large $x$ lead to the following

\begin{prop}\label{prop:rel}
The functions $M$ and $M_0$ are related as follows:
\begin{align}\label{M-relat}
&M(x,t,z) = G(x,t) M_0(x,t,z) \ee^{N(x,t)z\hat\Lambda},
\intertext{where} 
\label{N}
&N(x,t) = x - y(x,t) = \int_x^{+\infty} (q(\xi,t)-1)\dd \xi
\intertext{and}
\label{G}
&G(x,t) = P^{-1}(z) D^{-1}(x,t) P(z)  = \begin{pmatrix}
\alpha & \beta &  \bar\beta   \\
\bar\beta  & \alpha  & \beta     \\
\beta  & \bar\beta  & \alpha
\end{pmatrix}
\intertext{with}
\label{al-be}
&\alpha = \bar{\alpha} = \frac{1}{3}\left(q+1+\frac{1}{q}\right), \quad
\beta = \frac{1}{3}\left(q+\omega^2+\frac{\omega}{q}\right).
\end{align}
\end{prop}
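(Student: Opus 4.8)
The plan is to use that $M$ and $M_0$ are built from one and the same vector solution $\Phi$ of the matrix Lax pair \eqref{Lax1-1}, so that passing from one to the other is purely algebraic. From the proofs of Propositions~\ref{prop:lax1} and \ref{prop:lax2} one has $\tilde\Phi = P^{-1}D^{-1}\Phi$ and $\tilde\Phi_0 = P^{-1}\Phi$; eliminating $\Phi$ gives $\tilde\Phi = P^{-1}D^{-1}P\,\tilde\Phi_0 = G\tilde\Phi_0$ with $G = P^{-1}D^{-1}P$. Inserting the definitions $M = \tilde\Phi\,\ee^{-Q}$ and $M_0 = \tilde\Phi_0\,\ee^{-Q_0}$ (so that $\tilde\Phi_0 = M_0\,\ee^{Q_0}$) yields
\[
M = G\,\tilde\Phi_0\,\ee^{-Q} = G M_0\,\ee^{Q_0-Q}.
\]
Since $Q = y\Lambda + t\Lambda^{-1}$ and $Q_0 = x\Lambda + t\Lambda^{-1}$ are diagonal, they commute and $Q_0 - Q = (x-y)\Lambda = N z\hat\Lambda$, where $N = x-y$; the formula \eqref{N} for $N$ is then immediate from the definition \eqref{y} of $y$. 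This already gives the asserted relation \eqref{M-relat}.

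The $z$-independence and the circulant shape of $G$ follow from the factorization in \eqref{D-P-P}. Writing $P = \diag\{1,z,z^2\}\,F$ with $F$ the constant matrix there, and noting that $D^{-1}$ and $\diag\{1,z,z^2\}$ are both diagonal and hence commute, one gets
\[
G = P^{-1}D^{-1}P = F^{-1}\,\diag\{1,z,z^2\}^{-1}D^{-1}\diag\{1,z,z^2\}\,F = F^{-1}D^{-1}F,
\]
which is manifestly independent of $z$. Since $F$ is, up to the ordering of columns, the Fourier matrix associated with $\omega$, conjugating the diagonal matrix $D^{-1}=\diag\{q,1,1/q\}$ by $F$ produces a circulant matrix; evaluating its entries with $\omega^3=1$ and $1+\omega+\omega^2=0$ yields the circulant matrix displayed in \eqref{G}, with $\alpha$ and $\beta$ as in \eqref{al-be}.

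The only delicate point is that $\tilde\Phi = G\tilde\Phi_0$ holds for raw eigenfunctions, whereas $M$ and $M_0$ are the specific solutions fixed by the integral equations \eqref{M-int3} and \eqref{M-int3-0}, so one must check that the normalizations are compatible. To do this I would set $\hat M\coloneqq G M_0\,\ee^{Nz\hat\Lambda}$ and verify directly that it solves the same system \eqref{M} as $M$: differentiating $\tilde\Phi = G\tilde\Phi_0$ and using the $x$- and $t$-equations of \eqref{Lax1} and \eqref{Lax0} gives the identity $G_x + G\Lambda + G U_0 - q\Lambda G = UG$ together with its $t$-analogue, whence a short computation shows $\hat M_x-[Q_x,\hat M]=U\hat M$ and $\hat M_t-[Q_t,\hat M]=V\hat M$. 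As $x\to+\infty$ one has $q\to1$, hence $D\to I$ and $G\to I$, while $N\to0$ and $M_0\to I$ by Propositions~\ref{p1} and \ref{p0}, so $\hat M\to I$; uniqueness of the solution of \eqref{M} normalized to $I$ at $x=+\infty$ then forces $\hat M=M$. This normalization and uniqueness step is the main obstacle, the remaining manipulations being routine once the $z$-independence of $G$ is recognized.
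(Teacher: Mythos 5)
You have reconstructed essentially the paper's own argument: the proof in the paper is precisely the observation that $M$ and $M_0$ are conjugates of one and the same system \eqref{Lax-ini} --- via $\tilde\Phi=P^{-1}D^{-1}\Phi$ and $\tilde\Phi_0=P^{-1}\Phi$, hence $\tilde\Phi=G\tilde\Phi_0$ with $G=P^{-1}D^{-1}P$ --- combined with the behavior of the eigenfunctions for large $x$, and your computations ($Q_0-Q=(x-y)\Lambda=Nz\hat\Lambda$; $G=F^{-1}D^{-1}F$ with $F$ the constant Vandermonde factor of \eqref{D-P-P}, hence $z$-independent and circulant) fill in the algebra correctly. (Incidentally, carrying the conjugation through with the paper's definition \eqref{al-be} of $\beta$ produces the transpose of the matrix displayed in \eqref{G}, i.e.\ $\beta$ and $\bar\beta$ interchanged; this is a labelling discrepancy in the paper's display, not a defect of your argument.)

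One step would fail as literally stated: the final appeal to ``uniqueness of the solution of \eqref{M} normalized to $I$ at $x=+\infty$''. There is no such uniqueness. Two invertible solutions of \eqref{MU} differ by $M\mapsto M\ee^{Q}C\ee^{-Q}$ with $C$ independent of $x$, and the requirement $\ee^{Q}C\ee^{-Q}\to I$ as $x\to+\infty$ only forces $C_{jj}=1$ and $C_{jl}=0$ when $\Re\lambda_j>\Re\lambda_l$; the entries with $\Re\lambda_j<\Re\lambda_l$ are multiplied by $\ee^{(\lambda_j-\lambda_l)y}\to 0$ as $y\to+\infty$, so they remain completely free. What actually characterizes $M$ (Proposition~\ref{p1}, reflecting the choice of endpoints \eqref{signs}) is the two-sided condition: $M\to I$ as $x\to+\infty$ \emph{and} $M$ bounded as $x\to-\infty$; the residual triangular freedom is then excluded because $\ee^{(\lambda_j-\lambda_l)y}$ blows up as $y\to-\infty$ when $\Re\lambda_j<\Re\lambda_l$. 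The fix costs you nothing, since your $\hat M=GM_0\ee^{Nz\hat\Lambda}$ does lie in this class: as $x\to-\infty$ one has $q\to1$, so $G\to I$, while $N$ tends to the finite constant $\int_{-\infty}^{\infty}(q(\xi,t)-1)\,\dd\xi$ and $M_0$ is bounded by Proposition~\ref{p0}; moreover $\det\hat M\equiv1$ (since $\det G=\det M_0=\det\ee^{Nz\hat\Lambda}=1$), so the quotient argument applies. With the uniqueness class corrected in this way --- both asymptotic conditions, for $z$ off $\Sigma$ where the inequalities $\Re\lambda_j\neq\Re\lambda_l$ are strict --- your proof is complete and coincides with the paper's.
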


\begin{rem}
In spite of the fact that $P^{-1}(z)=\frac{1}{3}\left(\begin{smallmatrix}
1 & \omega^2 & \omega \\
1 & \omega & \omega^2 \\
1 & 1 & 1 
\end{smallmatrix}\right)\left(\begin{smallmatrix}
1 & 0 & 0 \\
0 & z^{-1} & 0 \\
0 & 0 & z^{-2} 
\end{smallmatrix}\right)$ is singular at $z=0$, the factor $G$ is nonsingular. Moreover, it is independent of $z$.
\end{rem}

From \eqref{M-relat}-\eqref{Q} and \eqref{M0-as}-\eqref{M0-coef} we derive the following expansion of $M$ as $z\to 0$: 
\begin{equation} \label{M-as}
M = G\left(I + z\left\{-\frac{u_x}{3}\Omega + N \hat\Lambda\right\} 
+ z^2 \left\{-\frac{u}{3}\tilde\Omega -\frac{u_x}{3} N \Omega \hat\Lambda
+ \frac{N^2}{2}\hat\Lambda^2\right\}  + \ord(z^3)\right).
\end{equation}

\begin{prop}[symmetries] \label{p-sym}
$M(x,t,z)$ satisfies the symmetry relations:
\begin{enumerate}[\rm{(S}1)]
\item 
$\Gamma_1 \overline{M(x,t,\bar z)} \Gamma_1 = M(x,t,z)$ with $\Gamma_1 = 
\left(\begin{smallmatrix}
0 & 1 & 0 \\
1 & 0 & 0 \\
0 & 0 & 1
\end{smallmatrix}\right)$.
\item
$\Gamma_2 \overline{M(x,t,\bar z\omega^2)} \Gamma_2 = M(x,t,z)$ with $\Gamma_2 = 
\left(\begin{smallmatrix}
	0 & 0 & 1 \\
	0 & 1 & 0 \\
	1 & 0 & 0
\end{smallmatrix}\right)$.
\item 
$\Gamma_3 \overline{M(x,t,\bar z \omega)} \Gamma_3 = M(x,t,z)$ with $\Gamma_3 = 
\left(\begin{smallmatrix}
	1 & 0 & 0 \\
	0 & 0 & 1 \\
	0 & 1 & 0
\end{smallmatrix}\right)$.
\item
$M(x,t,z\omega) = C^{-1}M(x,t,z)C$, with $C=\left(\begin{smallmatrix}
	0 & 0 & 1 \\
	1 & 0 & 0 \\
	0 & 1 & 0
\end{smallmatrix}\right)$.
\end{enumerate}
\end{prop}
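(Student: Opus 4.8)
The plan is to establish (S1) and (S4) directly and then deduce (S2) and (S3) by composing them. For each of the two basic relations I would introduce the candidate transformed function --- for (S1), $\tilde M(x,t,z)=\Gamma_1\overline{M(x,t,\bar z)}\Gamma_1$, and for (S4), $\tilde M(x,t,z)=CM(x,t,z\omega)C^{-1}$ --- and prove that $\tilde M$ solves the \emph{same} normalized system \eqref{M-int3} as $M$; uniqueness of that solution then forces $\tilde M=M$.

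The first step is to record the symmetries of the coefficients. From \eqref{La}, using $\bar\omega=\omega^2$, one finds $\Gamma_1\overline{\Lambda(\bar z)}\Gamma_1=\Lambda(z)$ and $C^{-1}\Lambda(z)C=\Lambda(z\omega)$, whence, since $y$ and $t$ are real in \eqref{Q}, $\Gamma_1\overline{Q(x,t,\bar z)}\Gamma_1=Q(x,t,z)$ and $C^{-1}Q(x,t,z)C=Q(x,t,z\omega)$. For $U$ and $V$ the essential structural fact is that the constant matrix in \eqref{U}, the all-ones matrix in \eqref{V}, and $C$ itself are all circulants, hence mutually commuting; this gives $C^{-1}UC=U$ at once and, together with the phase change $1/(3z)\mapsto\omega^2/(3z)$ and $C^{-1}\diag(\omega^2,\omega,1)C=\diag(\omega,1,\omega^2)$, also $C^{-1}V(x,t,z)C=V(x,t,z\omega)$. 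The conjugated identities $\Gamma_1\overline U\Gamma_1=U$ and $\Gamma_1\overline{V(x,t,\bar z)}\Gamma_1=V(x,t,z)$ follow from the same bookkeeping, noting that complex conjugation interchanges $1-\omega$ with $1-\omega^2$ and the transposition $\Gamma_1(\cdot)\Gamma_1$ restores the pattern, while $q_x/(3q)$, $u$, and the scalar part of $V$ are real and $\Gamma_1$-invariant.

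With these identities, complex-conjugating (resp.\ leaving real) the Lax system \eqref{MU}--\eqref{MV} at the transformed spectral parameter and then conjugating by $\Gamma_1$ (resp.\ by $C$) shows that $\tilde M$ satisfies the very same pair of differential equations as $M(x,t,z)$; since $\Gamma_1 I\Gamma_1=I$ and $CIC^{-1}=I$, the normalization $\tilde M\to I$ is inherited as well. It then remains to check that $\tilde M$ satisfies the integral equation \eqref{M-int3} with the same endpoints of integration \eqref{signs}, and this compatibility is the real content of the proof.

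Thus the main obstacle is to reconcile the index permutation carried by $\Gamma$ with the prescription \eqref{signs}, which selects $\infty_{jl}$ according to the ordering of $\Re\lambda_j$ and $\Re\lambda_l$ and so is sector-dependent. The decisive observation is that both transformations permute these real parts in exactly the way $\Gamma$ permutes indices: since $\lambda_j(z\omega)=\lambda_{j+1}(z)$ (indices mod $3$), the cyclic permutation $\pi$ induced by conjugation by $C$ satisfies $\Re\lambda_{\pi(j)}(z\omega)=\Re\lambda_j(z)$, while $\Re\lambda_j(\bar z)=\Re\lambda_{(1\,2)(j)}(z)$ matches the transposition induced by $\Gamma_1$. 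Consequently $\tilde M_{jl}$ is governed by precisely the endpoint $\infty_{jl}$ of $M_{jl}(x,t,z)$, so $\tilde M$ and $M$ solve identical integral equations and must coincide. Finally, reading (S1) as an anti-linear involution and (S4) as a linear $\D{Z}_3$ symmetry of $M$ and composing them --- using the elementary matrix identities $C\Gamma_1=\Gamma_2$ and $C^2\Gamma_1=\Gamma_3$ together with $\overline{\omega}=\omega^2$ --- yields (S2) and (S3), which completes the proof.
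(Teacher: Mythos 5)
Your proof is correct; note that the paper itself offers \emph{no} proof of this proposition --- the four symmetries are simply asserted (as standard consequences of the construction of $M$ via \eqref{M-int3}, cf.\ \cite{BC}), and the only accompanying comment is the remark that any two of them imply the other two. Your argument supplies exactly the missing standard proof, and it is sound: the coefficient identities $\Gamma_1\overline{U}\Gamma_1=U$, $C^{-1}UC=U$, $\Gamma_1\overline{\Lambda(\bar z)}\Gamma_1=\Lambda(z)$, $C^{-1}\Lambda(z)C=\Lambda(z\omega)$ all check out (the circulant observation is the cleanest way to see the $C$-invariance), and the step you rightly single out as the crux --- that the permutations of $\{\Re\lambda_j\}$ effected by $z\mapsto\bar z$ and $z\mapsto z\omega$ coincide with the index permutations carried by $\Gamma_1$ and $C$, so the transformed function obeys \eqref{M-int3} with the \emph{same} endpoints \eqref{signs} --- is verified correctly, after which uniqueness closes the argument. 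Your derivation of (S2)--(S3) from (S1) and (S4) via $C\Gamma_1=\Gamma_2$ and $C^2\Gamma_1=\Gamma_3$ (both identities are true) is precisely the content of the paper's remark following the statement. Two minor points for completeness: since the integration in \eqref{M-int3} runs along lines $t=\mathrm{const}$, only the symmetries of $U$ and $\Lambda$ are actually needed for this route (those of $V$ and $Q$ are harmless but superfluous); and uniqueness for the Fredholm system holds only off the discrete exceptional set of $z$ where $M$ has poles, so one should add a sentence that the symmetry extends to those points by meromorphic continuation.
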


Actually, assuming any two symmetries from above, the other two follow.

\subsection{Matrix RH problem}
\subsubsection{Jump conditions}

For $z$ on the common boundary of two adjacent domains $\Omega_{\nu}$, the limiting values of $M$, being the solutions of the system of differential equations \eqref{M}, must be related by a matrix independent of $(x,t)$. 
\begin{figure}[ht]
\centering\includegraphics[scale=1]{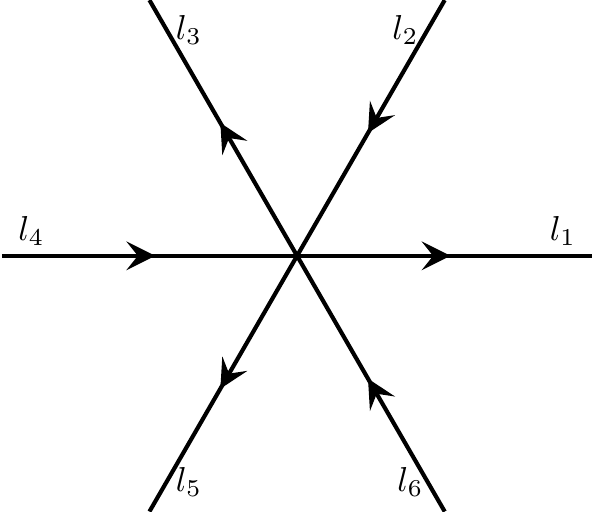}
\caption{Rays $l_{\nu}$ in the $z$-plane} 
\label{fig:sectors}
\end{figure}
Supplying the rays $l_{\nu}$ with an orientation, see Figure~\ref{fig:sectors}, 
we can write for the limiting values of $M$:
\begin{equation}\label{scat}
M_+(x,t,z)=M_-(x,t,z)\ee^{Q(x,t,z)}S_0(z)\ee^{-Q(x,t,z)},\quad z\in l_1\cup\dots\cup l_6.
\end{equation}
Considering \eqref{scat} at $t=0$ we see that $S_0(z)$ is determined by $u(x,0)$, i.e., by the initial data for the Cauchy problem \eqref{icc}, via the solutions $M(x,0,z)$ of the system of integral equations \eqref{M-int3} whose coefficients are determined by $u(x,0)$. Thus the relation \eqref{scat} can be considered as a ``pre-Riemann--Hilbert problem'' associated with \eqref{icc}: the data are $S_0(z)$, and we seek for a piecewise meromorphic function $M$ satisfying \eqref{scat} for all $(x,t)$, in the hope that one can further extract the solution $u(x,t)$ to \eqref{icc} by evaluating $M(x,t,z)$. 

The matrix $S_0(z)$ has a particular matrix structure, see \cite{BmS13}. Indeed, the integral equations \eqref{M-int3} allow studying the limiting values of $M$ as $x\to\pm\infty$ for $z\in\Sigma$. Set $t=0$ and consider, for example, the limiting values $M_\pm(x,0,z)$ for $z\in l_1=\D{R}_+$. For such $z$, $\Re\lambda_1(z)=\Re\lambda_2(z)<\Re\lambda_3(z)$, and thus the structure of integration paths in \eqref{M-int3} (see \eqref{signs}) implies that, as $x\to +\infty$,
\begin{equation}\label{lims-p}
M_+(x,0,z)  = \begin{pmatrix}
1 &  r_+(z){E}(x,z) & 0\\
0 & 1 & 0\\
0 & 0 & 1
\end{pmatrix} 
+\osmall(1)
\end{equation}
and 
\begin{equation}\label{lims-m}
M_-(x,0,z) = \begin{pmatrix}
1 & 0 & 0\\
r_-(z){E}^{-1}(x,z) & 1 & 0\\
0 & 0 & 1
\end{pmatrix}
+\osmall(1)
\end{equation}
with some $r_\pm(z)$ independent of $x$, and where 
\[
E(x,z) = \ee^{Q_{11}(x,0,z)-Q_{22}(x,0,z)} = \ee^{y(x,0)(\lambda_1(z)-\lambda_2(z))} = \ee^{y(x,0)z(\omega-\omega^2)}.
\]
The symmetry (S1) from Proposition~\ref{p-sym} implies that $r_-(z)=\overline{r_+(z)}$. On the other hand, $M_+$ and $M_-$ are bounded for all $x$ (particularly, as $x\to -\infty$).

Letting $x\to+\infty$ in the r.h.s.\ of 
\[
S_0(z)= \ee^{-Q(x,0,z)}M_-^{-1}(x,0,z) M_+(x,0,z)
\ee^{Q(x,0,z)}
\]
we get that $(S_0)_{31}(z)=(S_0)_{32}(z)\equiv 0$ and $(S_0)_{11}(z)=(S_0)_{33}(z)\equiv 1$. On the other hand, letting $x\to-\infty$ yields $(S_0)_{13}(z)=(S_0)_{23}(z)\equiv 0$. Thus
\begin{equation}\label{S-SS-1}
S_0(z)= 
\begin{pmatrix}
1 & 0 & 0\\
-r(z) & 1 & 0\\
0 & 0 & 1
\end{pmatrix}
\begin{pmatrix}
1 & \overline{r(z)} & 0\\
 0 & 1 & 0\\
0 & 0 & 1
\end{pmatrix},
\end{equation}
where $r(z)\coloneqq r_-(z)=\overline{r_+(z)}$. 

For $z\in l_4=\D{R}_-$, the structure of $S_0(z)$ is similar to \eqref{S-SS-1} whereas for $z\in l_{\nu}$ with $\nu\neq 1,4$, the construction of $S_0(z)$  follows from the symmetries of Proposition~\ref{p-sym}: $S_0(\omega z) = C S_0(z) C^{-1}$.

Thus, as in the case of the Degasperis--Procesi equation,  the jump matrix on the whole contour is determined by a scalar function --- the reflection coefficient $r(k)$ given for $k\in\D{R}$, and has only a $2\times 2$ nontrivial block (different blocks for different $l_{\nu}$'s, according to Proposition~\ref{p-sym}). In turn, as it follows from \eqref{lims-p} and \eqref{M-int3} for $t=0$, the reflection coefficient is determined by the initial condition $u_0(x)$ via the solution of \eqref{M-int3}, where $q(x,0)\equiv (-u_{xx}(x,0)+1)^{1/3}$ in the construction of $ U(x,0,z)$, see \eqref{U}, is replaced by $q_0(x)\equiv (-u_{0xx}(x)+1)^{1/3}$.

\begin{prop}
$r(z)=\ord(z^3)$ as $z\to 0$.
\end{prop}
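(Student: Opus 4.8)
The plan is to exploit the fact, recorded in Proposition~\ref{p0} and the Remark following it, that the low-order Taylor coefficients of $M$ at $z=0$ are insensitive to the sector. Combining \eqref{M-relat} with \eqref{M0-as}--\eqref{M0-coef} produces the expansion \eqref{M-as} of $M$ as $z\to 0$, of the form $M = G + zM^{(1)} + z^2 M^{(2)} + \ord(z^3)$; here $G$ is independent of $z$ and the factor $\ee^{Nz\hat\Lambda}$ is entire, so the sector-independence of the first coefficients of $M_0$ (the Remark after Proposition~\ref{p0}) carries over: $M^{(1)}$ and $M^{(2)}$ are the same in every sector $\Omega_\nu$, only the $\ord(z^3)$ remainder being sector-dependent. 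Hence, along each ray $l_\nu$ of $\Sigma$, the two boundary values $M_+$ and $M_-$ coming from the adjacent sectors share the same expansion through order $z^2$, so that
\[
M_+(x,t,z) - M_-(x,t,z) = \ord(z^3), \qquad z\to 0 \text{ along } l_\nu.
\]

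Next I would transfer this to the jump matrix. Since $M_\pm\to G$ with $G$ nonsingular and independent of $z$ (the Remark after Proposition~\ref{prop:rel}), $M_-$ is invertible near $z=0$ with bounded inverse, and \eqref{scat} exhibits the jump matrix $J(z)\coloneqq \ee^{Q}S_0\ee^{-Q} = M_-^{-1}M_+$. Therefore, entrywise,
\[
J(z) - I = M_-^{-1}\bigl(M_+ - M_-\bigr) = \ord(z^3), \qquad z\to 0.
\]

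It then remains to read off $r$. I would evaluate at $t=0$ on $l_1=\D{R}_+$, where $S_0$ has the explicit form \eqref{S-SS-1}, so that $(S_0)_{21}=-r(z)$. Because $\ee^{Q}$ is diagonal, conjugation multiplies this entry by $\ee^{Q_{22}-Q_{11}}$; at $t=0$ one has $Q_{22}-Q_{11}=y(x,0)z(\omega^2-\omega)=-\ii\sqrt3\,y(x,0)z$, which is purely imaginary for real $z$ and hence of modulus $1$. Thus $J_{21}(z)=-r(z)\ee^{-\ii\sqrt3\,y(x,0)z}=\ord(z^3)$, and dividing by the unimodular factor gives $r(z)=\ord(z^3)$ as $z\to 0^+$. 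The case $z\to 0^-$ follows identically from the analogous structure of $S_0$ on $l_4$, or directly from the symmetry $S_0(\omega z)=CS_0(z)C^{-1}$.

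The main obstacle — really the heart of the matter — is the sector-independence claim used in the first step: it is precisely the agreement of the expansions through order $z^2$, together with its \emph{generic failure} at order $z^3$ emphasized in the Remark after Proposition~\ref{p0}, that forces $J-I=\ord(z^3)$ and makes the stated order sharp. A secondary but essential point is to carry out the last step at $t=0$: for $t\neq 0$ the diagonal exponent $Q_{jj}$ contains a term $t z^{-1}\omega^{-j}$ that blows up as $z\to 0$, so the conjugating factor would no longer be bounded and the argument would have to be rephrased.
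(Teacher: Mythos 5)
Your proof is correct and is essentially the paper's own argument: the paper deduces the claim from exactly the same three ingredients --- the jump relation \eqref{scat}, the triangular structure \eqref{S-SS-1} of $S_0$, and the coincidence, up to order $z^2$, of the sector-independent expansions of $M_+$ and $M_-$ given by \eqref{M-as} --- with your write-up merely supplying the details (invertibility of $M_-$ near $z=0$, unimodularity of the conjugating exponential). One minor correction to your closing remark: since $Q_{22}-Q_{11}=-\ii\sqrt{3}\,\bigl(yz-t/z\bigr)$ is purely imaginary for real $z$, the conjugating factor is unimodular for \emph{every} $t$, not only at $t=0$, so no rephrasing would actually be needed there.
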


\begin{proof}
This follows from \eqref{scat}, \eqref{S-SS-1}, and from the fact that the terms 
up to $z^2$ of the expansions of $M_+$ and $M_-$ as $z\to 0$ coincide, see \eqref{M-as}.
\end{proof}

\begin{rem}
The dependence of the matrix $\ee^{Q}S_0\ee^{-Q}$, relating $M_+$ and $M_-$ in \eqref{scat}, on the parameters $(x,t)$ justifies the use of the variable $y=y(x,t)$ in  \eqref{y}. Indeed, introducing
\[
\hat M(y,t,z)\coloneqq M(x(y,t),t,z),
\]
\eqref{scat} can be written in terms of the parameters $(y,t)$ as 
\begin{equation}\label{RH-y}
\hat M_+(y,t,z)=\hat M_-(y,t,z)S(y,t,z),
\end{equation}
where the jump matrix
\begin{equation}\label{S-yt}
S(y,t,z)=\ee^{y\Lambda(z)+t\Lambda^{-1}(z)}S_0(z)\ee^{-y\Lambda(z)-t\Lambda^{-1}(z)}
\end{equation}
is determined in terms of the initial condition $u(x,0)=u_0(x)$ and depends explicitly on the parameters $(y,t)$.
\end{rem}
 
\subsubsection{Normalization condition}
\begin{equation} \label{RH-norm}
\hat M(y,t,z)=I+\ord(1/z)\ \text{ as }\ z\to\infty.
\end{equation}
See (iii) in Proposition \ref{p1}.

\subsubsection{Residue conditions}
 
If $M(x,t,z)$ (or $\hat M(y,t,z)$) has poles in $\D{C}\setminus\Sigma$, the formulation of the RH problem has to be completed by residue conditions at these poles. The following statement holds true.
 
\begin{prop}[see \cite{BC}]
Generically, there are at most a finite number of poles lying in $\D{C}\setminus\Sigma$, each of them being simple, with residue conditions of a special matrix form: distinct columns of $\hat M$ have distinct poles, and if $z_n$ is a pole of the $l$-th column, then for some $j\neq l$
\begin{equation}\label{res}
\Res_{z=z_n}\hat M^{(l)}(y,t,z)=\hat M^{(j)}(y,t,z_n)v_n^{jl}\ee^{y(\lambda_j(z)-\lambda_l(z))+t (\lambda_j^{-1}(z)-\lambda_l^{-1}(z))}
\end{equation}
with some scalar constants $v_n^{jl}$.
\end{prop}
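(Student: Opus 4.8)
The plan is to follow the Beals--Coifman analysis of \cite{BC} adapted to the present $3\times 3$ system, reading off the discrete data directly from the integral equations \eqref{M-int3}. The starting observation is that these Fredholm equations decouple column by column: since $(UM)_{jl}=\sum_m U_{jm}M_{ml}$ involves only the $l$-th column of $M$, each column $M^{(l)}(\,\cdot\,,\,\cdot\,,z)$ solves a closed system of three scalar Fredholm integral equations whose kernel depends analytically on $z$ in the corresponding sector $\Omega_\nu$. Poles of $M^{(l)}$ are therefore exactly the values $z_n$ at which the associated integral operator $I-K_z^{(l)}$ fails to be invertible.

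First I would establish the counting and regularity statements by analytic Fredholm theory. Because $z\mapsto K_z^{(l)}$ is an analytic family of compact operators and $M\to I$ as $z\to\infty$ (Proposition~\ref{p1}(iii), so that $I-K_z^{(l)}$ is invertible for large $\abs{z}$), the exceptional set in each sector is discrete and bounded, hence finite. Genericity is invoked precisely here: for generic initial data the null space of $I-K_{z_n}^{(l)}$ is one-dimensional, so that the pole is simple, and the exceptional sets of different columns are disjoint, so that distinct columns have distinct poles; this rules out the non-generic coincidences and higher-order degeneracies.

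The heart of the matter is the residue relation \eqref{res}. At a simple pole $z_n$ of the $l$-th column write $R_n(y,t)=\Res_{z=z_n}\hat M^{(l)}$. Undoing the normalization $M=\tilde\Phi\,\ee^{-Q}$, the column $\tilde\Phi^{(l)}=M^{(l)}\ee^{Q_{ll}}$ solves the Lax system \eqref{Lax1}, whose coefficients are regular at $z_n$; taking residues of that system then shows that $R_n\,\ee^{Q_{ll}(z_n)}$ is itself a vector solution of the Lax equations at $z=z_n$. The choice of integration limits \eqref{signs} forces this residue solution to be subdominant as $x\to\pm\infty$, i.e.\ a bound state, so by uniqueness (up to a scalar) of such a decaying solution it must be proportional to one of the other dedicated columns, $R_n\,\ee^{Q_{ll}(z_n)}=v_n^{jl}\,\tilde\Phi^{(j)}(z_n)=v_n^{jl}\,\hat M^{(j)}(y,t,z_n)\,\ee^{Q_{jj}(z_n)}$ for the appropriate $j\neq l$ dictated by the sector. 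Since both sides satisfy the same first-order system in $x$ and $t$, the scalar $v_n^{jl}$ is independent of $(y,t)$. Solving for $R_n$ and inserting $Q_{kk}=y\lambda_k+t\lambda_k^{-1}$ from \eqref{Q} then reproduces \eqref{res} exactly.

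I expect the step identifying the bound state --- showing that the residue solution is proportional to one specific column $\tilde\Phi^{(j)}$ rather than to a combination, and that the proportionality constant is $(y,t)$-independent --- to be the main obstacle. It requires careful bookkeeping of the growth rates $\Re\lambda_1,\Re\lambda_2,\Re\lambda_3$ in each sector $\Omega_\nu$ to pin down which index $j$ occurs, together with the one-dimensionality of the space of solutions of the Lax ODE decaying in the relevant directions. The passage from $M$ to $\hat M$ is harmless, since the change of variables $x\mapsto y$ in \eqref{y} does not affect the $z$-analyticity, and the exponentials are already written in the $(y,t)$ variables.
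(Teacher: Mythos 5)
The paper offers no proof of this proposition at all: it is quoted from Beals--Coifman \cite{BC}, so the only benchmark is whether your sketch faithfully reconstructs that argument. In outline it does: the column-by-column decoupling of \eqref{M-int3} (since $(UM)_{jl}$ involves only the $l$-th column), analytic Fredholm theory in each sector $\Omega_\nu$ together with invertibility for large $\abs{z}$, the observation that the residue $R_n$ at a simple pole satisfies the homogeneous equation so that $R_n\ee^{Q_{ll}(z_n)}$ is a subdominant solution of the Lax system at $z=z_n$, its identification (generically, via one-dimensionality of the decaying solution space dictated by \eqref{signs}) with a multiple of the dedicated column $\tilde\Phi^{(j)}(z_n)$, and the $(y,t)$-independence of $v_n^{jl}$ because both sides solve the same first-order system. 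Writing $Q_{jj}-Q_{ll}=y(\lambda_j-\lambda_l)+t(\lambda_j^{-1}-\lambda_l^{-1})$ then reproduces \eqref{res} exactly; this is the content of the cited result.

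One step, as written, would fail: ``the exceptional set in each sector is discrete and bounded, hence finite.'' Analytic Fredholm theory gives discreteness only in the \emph{open} sector $\Omega_\nu$, and invertibility for large $\abs{z}$ gives boundedness, but a discrete bounded subset of an open sector can still be infinite --- the exceptional points may accumulate on the boundary rays $\Sigma$ or at $z=0$. Excluding such accumulation (and excluding spectral singularities lying on $\Sigma$ itself, which is implicit in the proposition's claim that the poles lie in $\D{C}\setminus\Sigma$) is not a consequence of Fredholm analyticity; it is part of the genericity hypothesis, exactly as in \cite{BC}. So genericity must carry the finiteness claim as well, not only the simplicity of poles and the distinctness of the columns' pole sets to which you attribute it. With that attribution corrected, your outline matches the cited argument.
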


These constants, similarly to the jump matrix $S_0(z)$, are determined by the initial condition $u_0(x)$ via the solution of \eqref{M-int3} considered at $t=0$; thus conditions \eqref{res} have to be consistent with the symmetries of Proposition~\ref{p-sym}.

Now we observe that relations \eqref{RH-y}, \eqref{RH-norm}, \eqref{res} can be viewed as conditions defining a Riemann--Hilbert problem: 

\begin{matrixrhp*}
Given the scattering data $\{S_0(z), z\in\Sigma;\{v_n^{jl}\}_{n=1}^N\}$ (which are determined by $u_0(x)$), find, for each $y\in\D{R}$ and $t\geq 0$, a piecewise (relative to $\Sigma$) meromorphic $3\times 3$ matrix-valued function $\hat M(y,t,z)$ satisfying the jump condition \eqref{RH-y}, the normalization condition \eqref{RH-norm}, and the residue conditions \eqref{res}.
\end{matrixrhp*}

Notice that since $\det S_0\equiv 1$, the solution of this problem (which exists by construction) is unique, by Liouville's theorem.

Summarizing, we arrive at the following

\begin{thm}[representation of $u$ in terms of $\hat M$] \label{thm-1}
Let $u(x,t)$ be the solution of the Cauchy problem \eqref{icc}. Then $u(x,t)$ can be expressed in terms of the solution $\hat M(y,t,z)$ of the matrix Riemann--Hilbert problem formulated above, with data (jump matrix and residue conditions) determined by $u_0(x)$. This expression has a parametric form:
\[
u(x,t)=\hat u(y(x,t),t),
\]
where
\begin{subequations}   \label{u-matr}
\begin{align}\label{x-y-matr}
&x(y,t) = y + \lim_{z\to 0}\left(\frac{\sum_{j=1}^{3}\hat M_{j3}(y,t,z)}{\sum_{j=1}^{3}\hat M_{j3}(y,t,0)}-1\right)\frac{1}{z}, \\ 
\label{u-y-matr}
&\hat u(y,t) = \frac{\partial x(y,t)}{\partial t}.
\end{align}
\end{subequations}
\end{thm}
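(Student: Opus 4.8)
The plan is to verify the two parametric formulas \eqref{x-y-matr} and \eqref{u-y-matr} directly. Since the matrix RH problem has a unique solution, I would first identify $\hat M(y,t,z)$ with $M(x(y,t),t,z)$, where $M$ is the eigenfunction attached to the given solution $u$ (built from \eqref{M-int3} and satisfying \eqref{scat}); this eigenfunction solves the RH problem by construction, so uniqueness forces the identification. All the work then reduces to reading off the small-$z$ behavior from the explicit expansion \eqref{M-as} and to one implicit differentiation.

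For \eqref{x-y-matr} I would observe that summing the entries in a column of $M$ is the same as multiplying on the left by the row vector $(1,1,1)$. The three algebraic facts I need, all read off from \eqref{G}, \eqref{al-be}, and the definitions of $\Omega$ and $\hat\Lambda$, are: $(1,1,1)G=q\,(1,1,1)$, because $\alpha+\beta+\bar\beta=q$ thanks to $1+\omega+\omega^2=0$; $(1,1,1)\Omega=(0,0,0)$, by the same cancellation applied columnwise; and the third entry of $(1,1,1)\hat\Lambda$ equals $1$. Substituting these into \eqref{M-as}, the coefficient $-\tfrac{u_x}{3}\Omega$ in the order-$z$ term is killed by $(1,1,1)$ on the left, while $N\hat\Lambda$ contributes exactly $N$ in the third column. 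This yields
\[
\sum_{j=1}^3 M_{j3}(x,t,z)=q(x,t)\bigl(1+N(x,t)\,z+\ord(z^2)\bigr),
\]
and in particular $\sum_{j}M_{j3}(x,t,0)=q(x,t)$, which is nonzero by the positivity assumption on $-u_{xx}+1$. Dividing, subtracting $1$, and dividing by $z$ therefore leaves precisely $N(x,t)$ in the limit. Passing to the variables $(y,t)$ and using $N=x-y$ from \eqref{N} converts this into \eqref{x-y-matr}.

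For \eqref{u-y-matr} I would differentiate the defining identity $y(x(y,t),t)\equiv y$ at fixed $y$. From \eqref{y} one has $y_x=q$ directly, and $y_t=-\int_x^{\infty}q_\tau\,\dd\xi=\int_x^{\infty}(uq)_\xi\,\dd\xi=-uq$, where I use the equivalent form $q_t=-(uq)_x$ of \eqref{ic} together with the decay of $u$ as $\xi\to+\infty$. The chain rule then gives $q\,\partial_t x(y,t)-uq=0$, i.e.\ $\partial_t x(y,t)=u(x(y,t),t)=\hat u(y,t)$, which is \eqref{u-y-matr}.

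The computation is essentially mechanical, so the one point that really needs care is the justification that the RH solution $\hat M$ genuinely admits the expansion \eqref{M-as} with those precise coefficients. This is where the two-sided control of the spectral parameter enters: the coefficients come from Proposition~\ref{prop:rel} together with $M_0\to I$ as $z\to0$ (Proposition~\ref{p0}), so that the normalization of the RH problem at $z=\infty$ propagates, through the factorization $M=G M_0\ee^{N z\hat\Lambda}$, into the correct local behavior at $z=0$. Thus the main obstacle is not the algebra but ensuring that the object produced by the RH problem is indeed the regular eigenfunction governing \eqref{M-as}; once that is in hand, both formulas follow as above.
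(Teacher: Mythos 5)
Your proposal is correct and follows essentially the same route as the paper: the paper's proof likewise reads off the limit in \eqref{x-y-matr} as $N=x-y$ from the small-$z$ expansion \eqref{M-as} (your row-vector computation $(1,1,1)G=q\,(1,1,1)$, $(1,1,1)\Omega=0$ is exactly what makes that expansion yield $\hat q(1+Nz+\ord(z^2))$ in the third column, cf.\ \eqref{mu3-as}), and then obtains \eqref{u-y-matr} from $q_t=-(uq)_x$, $\partial y/\partial t=-uq$, and the implicit-function identity $0=y_x x_t+y_t$. Your additional remarks on identifying $\hat M$ with the eigenfunction $M$ via Liouville-type uniqueness make explicit what the paper treats as definitional, so there is no substantive difference.
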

  
\begin{proof}
In view of \eqref{M-as}, the r.h.s.\ of \eqref{x-y-matr} gives $y+N$, which, in view of \eqref{N}, gives $x(y,t)$. In turn, the form $q_t = -(uq)_x$ of the sDP equation \eqref{ic} and the definition \eqref{y} of $y(x,t)$ imply that $\frac{\partial y}{\partial t}(x,t)=-\int_x^{\infty} q_t \dd \xi = -uq$, which, in view of the identity $0=\frac{\partial y}{\partial x}\frac{\partial x}{\partial t}+\frac{\partial y}{\partial t}$, gives \eqref{u-y-matr}.
\end{proof}

Theorem~\ref{thm-1} gives a \emph{representation} of the solution of the Cauchy problem \eqref{icc} \emph{under assumption of existence of a solution to this problem}. On the other hand, the construction of the RH problem above allows \emph{solving} the Cauchy problem \eqref{icc} \emph{under assumption of existence of a solution to the RH problem satisfying a certain structural condition}.

\begin{thm}[structured matrix RH problem]\label{thm-2}
Let $S_0(k)$ and $\{z_n,v_n^{jl}\}$ be respectively the scattering matrix and the residue parameters determined by $u_0(x)$ via the solution of \eqref{M-int3} at $t=0$. Let the Riemann--Hilbert problem specified by the jump condition \eqref{RH-y}, the residue conditions \eqref{res}, and the normalization condition \eqref{RH-norm} have a unique solution $\hat M(y,t,z)$ satisfying the structural condition at $z=0$:
\begin{equation}\label{M-z0}
\hat M(y,t,0)=
\begin{pmatrix}
\alpha&\beta&\bar\beta\\
\bar\beta&\alpha&\beta\\
\beta&\bar\beta&\alpha
\end{pmatrix}
\end{equation}
where $\alpha(y,t)$ and $\beta(y,t)$ have the form
\begin{equation}\label{al-be-M}
\alpha=\bar{\alpha}=\frac{1}{3}\left(\hat q+1+\frac{1}{\hat q}\right),\quad\beta=\frac{1}{3}\left(\hat q+\omega^2+\frac{\omega}{\hat q}\right)
\end{equation}
with some $\hat q(y,t)>0$ such that $\hat q\to 1$ as $y\to\infty$. Introduce $x=x(y,t)$ and $u=u(x,t)$ as follows:
\begin{enumerate}[$\bullet$]
\item 
$\dfrac{\partial x}{\partial y}=\hat q^{-1}(y,t)$ and $x-y\to 0$ as $y\to +\infty$,
\item
$u(x,t)\coloneqq\dfrac{\partial x}{\partial t}(y,t)\Big\vert_{y=y(x,t)}$,
\item
$q(x,t)\coloneqq q(y(x,t),t)$.
\end{enumerate}
Then:
\begin{enumerate}[\em(i)]
\item 
$\hat q(y,t)$ satisfies the differential equation 
\begin{equation}\label{sDP-y}
\left(\frac{\hat q_y}{\hat q}\right)_t = \hat q^2 - \frac{1}{\hat q}.
\end{equation}
\item
$u(x,t)$ and $q(x,t)$ satisfy the system of equations 
\begin{equation}\label{sDP-sys}
q_t=-(uq)_x,\qquad u_{xx}=1-q^3,
\end{equation}
which is equivalent to \eqref{ic}.   
\item
$u(x,t)$ satisfies the initial condition \eqref{ini-cond}.    
\end{enumerate}
\end{thm}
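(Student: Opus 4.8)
The plan is to run the inverse scattering transform in reverse: starting from the RH solution $\hat M$, reconstruct a Lax pair in the variables $(y,t)$ and read off the equation from its compatibility. First I would set $\Psi(y,t,z)=\hat M(y,t,z)\ee^{y\Lambda(z)+t\Lambda^{-1}(z)}$. By \eqref{S-yt} the jump \eqref{RH-y} becomes $\Psi_+=\Psi_-S_0$ with $S_0$ independent of $(y,t)$, and the exponential factors in \eqref{res} are precisely those carried by $\ee^{y\Lambda+t\Lambda^{-1}}$, so the residue relations for $\Psi$ are also $(y,t)$-independent. Hence $\Psi_y\Psi^{-1}$ and $\Psi_t\Psi^{-1}$ have no jump across $\Sigma$; after checking that the rank-one residue conditions force their poles at the $z_n$ to cancel, they extend to rational functions of $z$. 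Using the normalization \eqref{RH-norm}, $\hat M=I+\ord(1/z)$, together with Liouville's theorem I would obtain the $y$-equation $\Psi_y\Psi^{-1}=z\hat\Lambda+[\hat M_1,\hat\Lambda]$, where $\hat M_1$ is the $1/z$-coefficient at infinity; and, using the structural condition \eqref{M-z0}--\eqref{al-be-M} which gives $\hat M(y,t,0)=G$ with $G$ as in \eqref{G}, the $t$-equation $\Psi_t\Psi^{-1}=\tfrac1z\,G\hat\Lambda^{-1}G^{-1}$ (a simple pole at $z=0$, vanishing at infinity).

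Next I would prove (i). Writing the $t$-flow as $\hat M_t=\tfrac1z\bigl(G\hat\Lambda^{-1}G^{-1}\hat M-\hat M\hat\Lambda^{-1}\bigr)$ and inserting the expansion $\hat M=G\bigl(I+zC_1+z^2C_2+\ord(z^3)\bigr)$ near $z=0$, regularity at $z=0$ forces the $z^{-1}$ term to vanish, and matching the next orders yields the hierarchy $G_t=G[\hat\Lambda^{-1},C_1]$, $(GC_1)_t=G[\hat\Lambda^{-1},C_2]$, and so on. Since $G$ is explicit in $\hat q$ (with $\det G=1$), and the commutators with the diagonal $\hat\Lambda$ interact with the circulant structure of $\Omega$ in a controlled way, these matrix identities should collapse to the single scalar equation \eqref{sDP-y}. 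This reduction --- isolating the scalar content of the matrix hierarchy using the precise algebra of $G$, $\hat\Lambda$ and $\Omega$, and confirming consistency with the $z=\infty$ data $[\hat M_1,\hat\Lambda]$ --- is the step I expect to be the main obstacle.

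Once (i) is in hand, (ii) is a change-of-variables computation. From $\partial x/\partial y=\hat q^{-1}$, $u=\partial x/\partial t$ and $q=\hat q$ I would derive the chain-rule identities $u_x=-\hat q_t/\hat q$, $u_{xx}=-\hat q\,(\hat q_y/\hat q)_t$ and $\partial y/\partial t|_x=-uq$, all of which follow from the definitions alone. Then \eqref{sDP-y} gives at once $u_{xx}=1-\hat q^3=1-q^3$, while the same identities give $q_t=-(uq)_x$; these are \eqref{sDP-sys}. Their equivalence with \eqref{ic} is then purely algebraic: from $q^3=1-u_{xx}$ one has $3q^2q_t=-u_{txx}$ and $3q^2q_x=-u_{xxx}$, and substituting $q_t=-(u_xq+uq_x)$ reproduces \eqref{ic}.

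Finally, for (iii) I would use uniqueness at $t=0$. The eigenfunctions $M(x,0,z)$ built from $u_0$ via \eqref{M-int3} solve the very RH problem whose data were extracted from $u_0$, so by uniqueness $\hat M(y,0,z)=M(x,0,z)$, and Proposition~\ref{prop:rel} identifies the structural value as $\hat q(y,0)=q_0(x)=(1-u_{0xx})^{1/3}$. Hence $q(x,0)=q_0(x)$ and, by (ii), $u_{xx}(x,0)=1-q_0^3=u_{0xx}(x)$; together with the decay forced by $\hat q\to1$ as $y\to\infty$, integrating twice yields $u(x,0)=u_0(x)$. The delicate parts throughout are the pole cancellation and Liouville reconstruction of the Lax pair and, above all, the algebraic collapse of the matrix hierarchy to \eqref{sDP-y}; the change of variables and the initial-condition check are routine once (i) is established.
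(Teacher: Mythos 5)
Your overall architecture coincides with the paper's: you define the same $\Psi=\hat M\ee^{y\Lambda+t\Lambda^{-1}}$, remove the jumps and poles, and use Liouville's theorem to obtain the two equations $\Psi_y\Psi^{-1}=z\hat\Lambda+W$ and $\Psi_t\Psi^{-1}=\frac{1}{z}G\hat\Lambda^{-1}G^{-1}$; your parts (ii) and (iii) also match the paper's proof --- indeed your purely algebraic chain-rule derivation of $q_t=-(uq)_x$ is slightly cleaner than the paper's, which instead integrates $\left(\frac{q_t+uq_x}{q}\right)_x=-u_{xx}$ and invokes the decay $u_x\to 0$ as $x\to+\infty$.

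However, your plan for (i) --- the step you yourself flag as the main obstacle --- would fail as stated. Expanding only the $t$-equation near $z=0$ produces the triangular hierarchy $G_t=G[\hat\Lambda^{-1},C_1]$, $(GC_1)_t=G[\hat\Lambda^{-1},C_2]$, \dots, in which each new order introduces a new unknown matrix $C_k$ and no $y$-derivative ever appears. Such a system cannot ``collapse'' to \eqref{sDP-y}, whose left-hand side is $(\hat q_y/\hat q)_t$: within the RH-only framework $C_1$ is an independent unknown (a posteriori it encodes $N$ and $u_x$, cf.\ \eqref{M-as}), so $G_t=G[\hat\Lambda^{-1},C_1]$ places no constraint on $\hat q$ by itself. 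The missing ingredient is the $y$-equation. The paper closes the system by cross-differentiating the two Lax equations, i.e.\ by the compatibility conditions \eqref{Lax-W-A}, $A_y=[W,A]$ and $W_t=[A,\hat\Lambda]$ with $A=G\hat\Lambda^{-1}G^{-1}$, and it is then the structural condition \eqref{M-z0}--\eqref{al-be-M} that reduces these matrix equations to the single scalar equation \eqref{sDP-y}. Equivalently, staying within your own scheme, you could expand the $y$-equation at $z=0$: at order $z^0$ it gives $W=G_yG^{-1}$, and at order $z$ it gives $C_{1y}=G^{-1}\hat\Lambda G-\hat\Lambda$; combining the latter with your relation $G^{-1}G_t=[\hat\Lambda^{-1},C_1]$ yields the closed matrix equation $(G^{-1}G_t)_y=[\hat\Lambda^{-1},G^{-1}\hat\Lambda G]$, the analogue of \eqref{Lax-W-A}, which the ansatz \eqref{al-be-M} then reduces to \eqref{sDP-y}. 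Without this input from the $y$-equation, part (i) --- and hence $u_{xx}=1-q^3$ in (ii), which is the only place \eqref{sDP-y} is used --- is not established.
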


\begin{rem}
The fact that a solution $\hat M$ of the RH problem satisfies \eqref{M-z0} actually follows from the symmetries of $\hat M$ (see Proposition~\ref{p-sym}), which, in turn, follow from the respective symmetries of $S_0(k)$ and $\{z_n,v_n^{jl}\}$, and the uniqueness of the solution of the RH problem. Thus, it is the representation of $\alpha$ and $\beta$ as in \eqref{al-be-M}, which is a necessary condition in view of \eqref{G}-\eqref{M-as}, that constitutes an \emph{additional condition} that must be imposed on the solution of the RH problem. This situation is different from that in the case of classical integrable equations like the Korteweg--de Vries equation or the nonlinear Schr\"odinger equation, where no conditions on the solution of the respective RH problems are to be imposed. On the other hand, it is typical for the so-called ``peakon equations'', like the CH and DP equations \cites{BmS08c, BmS13}.
\end{rem}

\begin{proof}[Proof of Theorem~\ref{thm-2}]
The first step in the proof that the solution of the RH problem gives rise to a solution of the nonlinear equation in question follows the standard scheme, see, e.g.\ \cite{FT}. Defining $\Psi\coloneqq\hat M\ee^{y\Lambda+t\Lambda^{-1}}$ and introducing the derivatives $\Psi_y$ and $\Psi_t$, one obtains that:
\begin{enumerate}[a)]
\item
$
\Psi_y \Psi^{-1}=z \hat \Lambda + W(y,t) + \ord(1/z)$ \text{as}\ $z\to \infty$,
where $W=[M_1,\hat \Lambda]$ and $M_1$ comes from the expansion 
$\hat M=I + \frac{M_1}{z}+ \ord(1/z^2)$ as $z\to \infty$,
whereas  $\Psi_y \Psi^{-1}$ has no jumps and is bounded in $z\in \mathbb C$.
\item
$\Psi_t \Psi^{-1}= \frac{A(y,t)}{z} + \ord(1)$ \text{as}\ $z\to 0$, where $A=G_0\hat \Lambda^{-1}G_0^{-1}$ and $G_0$ comes from the expansion $\hat M=G_0 + \ord(z)$ as $z\to 0$; $\Psi_t \Psi^{-1} = \ord(1/z)$ as  $ z\to \infty$; and $\Psi_y \Psi^{-1}$ has no jumps and is bounded in $z\in \mathbb C\setminus {0}$.
\end{enumerate}
Then by Liouville's theorem we conclude that $\Psi$ satisfies the system of equations
\begin{subequations}   \label{Lax-y}
\begin{align}\label{Lax-y-y}
&\Psi_y = z\hat \Lambda \Psi + W(y,t) \Psi, \\
\label{Lax-y-t}
& \Psi_t = \frac{A(y,t)}{z}\Psi.
\end{align}
\end{subequations}
Moreover, the compatibility condition $\Psi_{xt}=\Psi_{tx}$ for \eqref{Lax-y} 
leads to the system of equations for $W$ and $A$:
\begin{subequations}   \label{Lax-W-A}
\begin{align}
&A_y = [W,A], \\
& W_t = [A,\hat \Lambda].
\end{align}
\end{subequations}
Now we notice that 
\begin{enumerate}[(i)]
\item
taking into account the symmetries of Proposition~\ref{p-sym} provides  $\hat M(y,t,0)$ with the structure \eqref{M-z0} with $\alpha$ and $\beta$ satisfying the equation $\alpha^3+\beta^3+\bar\beta^3  - 3\alpha\abs{\beta}^2=0$,
\item
it is the additional condition \eqref{al-be-M} that reduces \eqref{Lax-W-A} to a single equation \eqref{sDP-y} for a single function, namely, for $\hat q$.
\end{enumerate}
Further, defining $\hat u(y,t)\coloneqq\frac{\partial x}{\partial t}(y,t)$, it follows that 
\begin{equation} \label{h-u-y}
\frac{\partial\hat u}{\partial y}(y,t)=\left(\frac{1}{\hat q}\right)_t=-\frac{\hat q_t}{\hat q^2}.
\end{equation}
Substituting this into \eqref{sDP-y} yields
\begin{equation} \label{h-q-3}
\hat q^3 - 1 = -\left(\hat u_y \hat q\right)_y\hat q,
\end{equation}
or, in the $(x,t)$ variables, 
\begin{equation} \label{q-3}
q^3 - 1 = -u_{xx}
\end{equation}
where $q(x,t)=\hat q(y(x,t),t)$ and $u(x,t)=\hat u(y(x,t),t)$. Further, taking into account that if $\hat f(y,t)=f(x,t)$, then $\frac{\partial \hat f}{\partial t} = \frac{\partial f}{\partial x}\frac{\partial x}{\partial t} + \frac{\partial f}{\partial t}=\frac{\partial f}{\partial x}u + \frac{\partial f}{\partial t}$, \eqref{sDP-y} and \eqref{q-3} give 
\begin{equation} \label{dif1}
\left(\frac{q_t+u q_x}{q}\right)_x = -u_{xx}.
\end{equation}
Provided $u_x\to 0$ as $x\to +\infty$, integrating \eqref{dif1} finally gives $q_t+u q_x = -q u_x$, or $q_t=-(uq)_x$. 

In order to verify the initial condition, one observes that for $t=0$, the RH problem reduces to that associated with $u_0(x)$, which yields $u(x,0)=u_0(x)$, owing to the uniqueness of the solution of the RH problem.
\end{proof}

\begin{rem}
Introducing $\phi$ by $\hat q=\ee^{2\phi}$, equation \eqref{sDP-y} reduces
to the Bullough--Dodd--Mikhailov equation \cites{DB77,M79}
\[
\phi_{yt}=\frac{1}{2}\left(\ee^{4\phi}-\ee^{-2\phi}\right).
\]
\end{rem}

The analysis of the soliton solutions, see Section~\ref{sec:loop}, suggests making the conjecture that initial data $u_0(x)$ satisfying the condition $-u_{0xx}+1>0$ for all $x$ give rise to a piecewise \emph{analytic} RH problem, i.e., without residue conditions. On the other hand, \emph{forcing} residue conditions (consistent with the symmetries) leads to a representation of non-classical solutions (singular, non-smooth, multivalued, etc.) provided that the solution of the associated RH problem satisfies the structural condition \eqref{M-z0}, \eqref{al-be-M} at $z=0$.

\begin{thm}\label{thm-3}
Let $r(k)$, $k\in\D{R}$ be a smooth function such that $r(k)=\ord(1/k)$ as $\abs{k}\to \infty$ and $r(0)=0$. Let $\{z_n,v_n^{jl}\}$ be a set which is consistent with the symmetries from Proposition~\ref{p-sym}. Assume that the RH problem constructed from the given data above has a solution, which, being evaluated at $z=0$,  satisfies the structural condition \eqref{M-z0}-\eqref{al-be-M}. 

Then $\hat q$ from \eqref{al-be-M} satisfies equation \eqref{sDP-y} and thus $q(x,t)$ and $u(x,t)$, determined from $\hat q$ as in Theorem~\ref{thm-2}, satisfy \eqref{sDP-sys}. Moreover, if, additionally, $r(k)=\ord(k^3)$ as $k\to 0$, then $u\to 0$ and $u_x\to 0$ as $x\to-\infty$.
\end{thm}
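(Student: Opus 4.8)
The plan is to prove the two assertions separately, treating the first as an immediate corollary of the reconstruction already carried out in Theorem~\ref{thm-2} and concentrating the real work on the decay statement.

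For the first assertion I would observe that the argument in the proof of Theorem~\ref{thm-2} is insensitive to the \emph{origin} of the scattering data: it uses only (a) the piecewise meromorphy of $\hat M$ relative to $\Sigma$ together with the normalization \eqref{RH-norm}, (b) the small-$z$ behaviour encoded in the structural condition \eqref{M-z0}--\eqref{al-be-M}, and (c) the symmetries of Proposition~\ref{p-sym}. All three are available here: the symmetries of $\hat M$ follow from the assumed symmetry-consistency of $\{S_0(k),z_n,v_n^{jl}\}$ together with uniqueness of the RH solution, while the structural condition is assumed outright. Hence, setting $\Psi\coloneqq\hat M\,\ee^{y\Lambda+t\Lambda^{-1}}$ and repeating the Liouville argument yields \eqref{Lax-y}, hence the pair \eqref{Lax-W-A}; the circulant structure at $z=0$ gives the trace relation $\alpha^3+\beta^3+\bar\beta^3-3\alpha\abs{\beta}^2=0$, and the representation \eqref{al-be-M} collapses \eqref{Lax-W-A} to the single scalar equation \eqref{sDP-y} for $\hat q$. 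The passage from \eqref{sDP-y} to \eqref{sDP-sys} through \eqref{h-u-y} and the reconstruction $x_y=\hat q^{-1}$, $u=x_t$ is then verbatim that of Theorem~\ref{thm-2}; its one external input, $u_x\to0$ as $x\to+\infty$ needed to integrate \eqref{dif1}, is guaranteed by the normalization $\hat q\to1$ as $y\to+\infty$ built into \eqref{al-be-M}.

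For the second assertion I would first reduce both decay statements to a single statement about $\hat q$ at $y\to-\infty$. Since $x_y=\hat q^{-1}>0$ the map $y\mapsto x$ is increasing, so $x\to-\infty$ is equivalent to $y\to-\infty$. From $u_x=\hat q\,\hat u_y$ and \eqref{h-u-y} one gets $u_x=-\hat q_t/\hat q=-(\ln\hat q)_t$, while integrating $x_y=\hat q^{-1}$ from $+\infty$ (where $x-y\to0$) gives $x-y=-\int_y^{\infty}(\hat q^{-1}-1)\,\dd y'$ and hence $u=x_t=-\partial_t\int_y^{\infty}(\hat q^{-1}-1)\,\dd y'$. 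Thus $u_x\to0$ and $u\to0$ as $y\to-\infty$ follow once $\hat q\to1$ as $y\to-\infty$, with convergence uniform enough in $t$ that the $t$-derivatives of $\hat q$ and of the tail integral vanish in the limit. Since $\hat q$ is read off from the circulant matrix $\hat M(y,t,0)$ via \eqref{al-be-M}, it suffices to prove $\hat M(y,t,0)\to I$ as $y\to-\infty$, and this is where the hypothesis $r(k)=\ord(k^3)$ is decisive. On the ray $l_1=\D{R}_+$ the conjugating exponentials in \eqref{S-yt} carry the phase $\ii\sqrt3\,(ky-t/k)$ (and its symmetry images on the remaining rays), whose stationary point $k_0=\sqrt{t/\abs{y}}$ migrates to $z=0$ as $y\to-\infty$. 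Away from $z=0$ the oscillations force the off-diagonal part of the jump to contribute only $\osmall(1)$; near the migrating stationary point, where no oscillatory cancellation is available, the cubic vanishing $r(k)=\ord(k^3)$ makes the jump $I+\osmall(1)$ there as well, and the finitely many poles $z_n$ (the loop solitons), being spatially localized, disappear as $y\to-\infty$ at fixed $t$. A nonlinear steepest-descent deformation then gives $\hat M(y,t,0)\to I$, whence $\hat q\to1$.

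The main obstacle is precisely the coincidence, in the limit $y\to-\infty$, of the stationary phase point $k_0$ with the point $z=0$ at which the exponent $t\Lambda^{-1}(z)$ in \eqref{S-yt} is itself singular. Controlling the RH problem in this confluent regime requires a local rescaling near $z=0$ in which the cubic vanishing $r=\ord(k^3)$ exactly matches the local scaling of the phase, so that the reflection amplitude contributes nothing to leading order; any slower vanishing of $r$ would leave a nonzero limit for $\hat q$, and hence a nonzero boundary value of $u$, at $-\infty$. I would also stress that purely ``soft'' conservation-law manipulations cannot replace this analysis: writing $I(t)\coloneqq\int_{-\infty}^{\infty}(\hat q^{-1}-1)\,\dd y$ and using \eqref{h-u-y} gives $\dot I(t)=\hat u(+\infty,t)-\hat u(-\infty,t)=-u(-\infty,t)$, which merely reproduces the identity $u(-\infty,t)=-\dot I(t)$ tautologically, so pinning the boundary value of $u$ to $0$ genuinely requires the spectral input $r=\ord(k^3)$.
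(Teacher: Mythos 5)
Your route for both assertions coincides with the one the paper itself (implicitly) relies on: the paper prints no standalone proof of Theorem~\ref{thm-3}. Its first claim is indeed covered verbatim by the proof of Theorem~\ref{thm-2}, whose Liouville argument uses only the jump/residue/normalization structure, the symmetries, and the structural condition \eqref{M-z0}--\eqref{al-be-M}, never the provenance of the data; and its decay claim rests on the Section~\ref{sec:as} machinery, where the stationary points $\pm\varkappa=\pm\sqrt{t/\abs{y}}$ collapse onto $z=0$ and the hypothesis $r(k)=\ord(k^3)$ forces $h(\varkappa)=-\tfrac{1}{2\pi}\log(1-\abs{r(\varkappa)}^2)=\ord(\varkappa^6)$, hence amplitudes like $c_1\propto\sqrt{h/\varkappa^3}=\ord(\varkappa^{3/2})\to0$ and a finite, $t$-independent limit of the phase-shift integral $\Delta$. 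Your identification of the confluence of $\varkappa$ with the singularity of the phase at $z=0$ as the crux, and your reduction of both decay statements to $\hat q\to1$ (with $t$-uniformity) via $u_x=-(\ln\hat q)_t$ and $u=-\partial_t\int_y^{\infty}(\hat q^{-1}-1)\,\dd y'$, are exactly right.

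There is, however, one step that fails as stated: the claim that the poles, ``being spatially localized, disappear as $y\to-\infty$ at fixed $t$.'' The opposite happens. The residue factors in \eqref{mu-res} have the form $\ee^{y\nu+3t/\nu}$ with real $\nu$ (see \eqref{exp}), and for the soliton poles one has, e.g., $\ee^{-\sqrt{3}\rho y-\frac{\sqrt{3}}{\rho}t}$, which \emph{blows up} as $y\to-\infty$; the residue data decay only in the opposite direction $y\to+\infty$. Accordingly, $\hat M(y,t,z)$ does not tend to $I$ as $y\to-\infty$ when poles are present: in the explicit one-soliton \eqref{mu-sol}--\eqref{u-sol1-1} the coefficient $\alpha$ tends to $2\sqrt{3}\rho\neq0$ and $N=x-y\to 2\sqrt{3}/\rho\neq 0$. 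The conclusion survives for a different reason: in the limit of diverging norming constant one still gets $\hat q\to1$ and a limit of $N$ that is independent of $t$ (read off from \eqref{q-y-sol1-1}, \eqref{x-y-sol1-1}), whence $u\to0$ and $u_x\to0$ by your own formulas. To make the discrete part rigorous you must either perform the standard ``flip'' of the residue condition (rewriting it in terms of the reciprocal of the diverging exponential, which yields a decaying residue condition for a column-permuted problem whose solution one then compares at $z=0$), or invoke the explicit computation of Section~\ref{sec:loop}. The continuous-spectrum part of your argument needs no such repair.
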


\begin{rem}
The facts that $u\to 0$ and $u_x\to 0$ as $x\to +\infty$ follow from the asymptotic analysis in Section~\ref{sec:as}.
\end{rem}
      
\subsection{Vector RH problem}   \label{ssec:vector rhp}

The structure of $M$ at $z=0$, see \eqref{M-as}, suggests the transition from the $3\times 3$ matrix RH problem to a row vector RH problem, for the $1\times 3$ row vector-valued function $\mu$, which can be viewed as the result of left multiplication by the constant row vector $(1\ \ 1\ \ 1)$: 
\[
\mu(y,t,z)=\begin{pmatrix}1&1&1\end{pmatrix}\hat M(y,t,z). 
\]
Then \eqref{RH-y} reduces to the jump condition for $\mu$ across $\Sigma$:
\begin{equation} \label{RH}
\mu_+(y,t,z)=\mu_-(y,t,k)S(y,t,k),
\end{equation}
whereas the residue conditions and the normalization condition take respectively the forms:
\begin{equation}\label{mu-res}
\Res_{z=z_n}\mu_l(y,t,z)=\mu_j(y,t,z_n)v_n^{jl}\ee^{y(\lambda_j(z)-\lambda_l(z))+t (\lambda_j^{-1}(z)-\lambda_l^{-1}(z))}.
\end{equation}
and 
\begin{equation}  \label{vec-norm}
\mu(y,t,z)=\begin{pmatrix}1&1&1\end{pmatrix}+\ord(1/z)\quad\text{as }z\to\infty.
\end{equation}
Then \eqref{M-as} yields the following behavior of $\mu(\,\cdot\,,\,\cdot\,,z)$ as $z\to 0$:
\begin{subequations}   \label{mu-as}
\begin{align}\label{mu1-as}
&\mu_1=\hat q\left(1+z\omega N+\frac{1}{2}z^2\omega^2N^2\right)+\ord(z^3), \\ 
\label{mu2-as}
&\mu_2=\hat q\left(1+z\omega^2N+\frac{1}{2}z^2\omega N^2\right)+\ord(z^3), \\
\label{mu3-as}
&\mu_3=\hat q\left(1+zN+\frac{1}{2}z^2N^2\right)+\ord(z^3).
\end{align}
\end{subequations}
Similarly to the matrix RH problem, relations \eqref{mu-as} indicate that having known $\mu(y,t,z)$ for all $(y,t)$ and for all $z$ near $z=0$ allows reconstructing the solution $u(x,t)$ of the original initial value problem \eqref{icc} in a parametric form as follows: 
\begin{subequations}   \label{u}
\begin{align}
u(x,t)&=\hat u(y(x,t),t),\notag
\intertext{where}\label{x-y}
x(y,t)&=y+\lim_{z\to 0}\left(\frac{\mu_3(y,t,z)}{\mu_3(y,t,0)}-1\right)\frac{1}{z},\\ 
\label{u-y}
\hat u(y,t)&=\frac{\partial x(y,t)}{\partial t}.
\end{align}
\end{subequations}
Notice also the formula for $\hat q$:
\begin{equation}\label{q-y}
\hat q(y,t) = \mu_3(y,t,0).
\end{equation}

In the vector formulation, the structural condition \eqref{M-z0}-\eqref{al-be-M} is obviously lost. On the other hand, the definitions of $y(x,t)$ and $N(y,t)$, see \eqref{y} and \eqref{N}, yield the following \emph{necessary} condition to be satisfied by the coefficients $N$ and $\hat q$ in the expansion \eqref{mu-as}, for small $z$, of the solution of the vector RH problem:
\begin{equation}\label{N-q}
\frac{\partial N}{\partial y}=\frac{1-\hat q}{\hat q}.
\end{equation}

\section{Loop solitons}   \label{sec:loop}

For all $j,l$ and $z$, $(\lambda^j(z)-\lambda^l(z))^2 = -3\lambda_j(z)\lambda_l(z)$ and thus the exponential in \eqref{mu-res}
can be written as 
\begin{equation}\label{exp}
\ee^{y(\lambda_j-\lambda_l)+t(\lambda_j^{-1}-\lambda_l^{-1})} = \ee^{y\nu+t\frac{3}{\nu}},
\end{equation}
where $\nu=\nu(z)=\lambda_j(z)-\lambda_l(z)$.

When integrating soliton nonlinear equations by a Riemann--Hilbert approach, soliton solutions correspond to trivial jump conditions on the associated contour, which reduces the solution of the RH problem to the solution of a system of algebraic equations generated by the residue conditions. In order to have real-valued solutions, the exponentials in the residue conditions have to be real (see \cites{BmS08a, BmS13}). Particularly, for the  equation \eqref{ic}, this means that $\nu(z)$ for $z$ at the pole locations has to be real-valued. Since $\lambda_1(z)-\lambda_2(z)=z(\omega - \omega^2) = z \ii \sqrt{3}$, $\lambda_1(z)-\lambda_3(z)=z(\omega - 1) = - z \sqrt{3}\,\ee^{-\ii\pi/6}$, and $\lambda_3(z)-\lambda_2(z)=z(1 - \omega^2) = z  \sqrt{3}\,\ee^{\ii\pi/6}$, the poles may lie only on the lines $z\in{\ii \D{R}}\cup{\ee^{\ii\pi/6}\D{R}}\cup{\ee^{-\ii\pi/6}\D{R}}$. More precisely, the poles for $\mu_1$,  $\mu_2$, and $\mu_3$ may lie on ${\ii\D{R}}\cup{\ee^{\ii\pi/6}\D{R}}$, ${\ii \D{R}}\cup{\ee^{-\ii\pi/6}\D{R}}$, and ${\ee^{\ii\pi/6}\D{R}}\cup{\ee^{-\ii\pi/6}\D{R}}$, respectively.

Due to the symmetries (see Proposition~\ref{p-sym}), the simplest case involves $6$ poles, at $z=\rho\,\ee^{\frac{\ii\pi}{6}+\frac{\ii\pi m}{3}}$, $m=0,\dots,5$ for some $\rho>0$, with the residue condition
\[
\Res_{z=-\ii\rho}\mu_1(y,t,z)=\mu_2(y,t,\ii\rho)\gamma
\ee^{-\sqrt{3}\rho y-\frac{\sqrt{3}}{\rho}t}
\]
for some constant $\gamma\equiv |\gamma|\ee^{\ii\phi}$ (other residue conditions are determined by the symmetries). Then the solution of the (vector) RH problem has the form
\begin{equation}\label{mu-sol}
\mu = \left(1+\frac{\alpha}{z+\ii\rho}+\frac{\bar\alpha\omega}{z+\rho\ee^{\frac{\ii\pi}{6}}}\,,1+\frac{\bar\alpha}{z-\ii\rho}+\frac{\alpha\omega^2}{z+\rho\ee^{\frac{-\ii\pi}{6}}}\,,1+\frac{\alpha\omega}{z-\rho\ee^{\frac{\ii\pi}{6}}}+\frac{\bar\alpha\omega^2}{z-\rho\ee^{\frac{-\ii\pi}{6}}}\right),
\end{equation}
where 
\begin{align}\label{al}
\alpha &= 2\sqrt{3}\rho\,\frac{\ee^{\ii\phi}\hat e+\hat e^2}{1-4\cos(\phi-\frac{\pi}{3})\hat e+\hat e^2}
\intertext{with}\label{hat-e}
\hat e (y,t) &= \frac{|\gamma|}{2\sqrt{3}\rho}\,\ee^{-\sqrt{3}\rho y- \frac{\sqrt{3}}{\rho}t}=\ee^{-\sqrt{3}\rho(y + \frac{t}{\rho^2} + y_0)}
\intertext{and} 
y_0&= -\frac{1}{\sqrt{3}\rho}\log\frac{|\gamma|}{2\sqrt{3}\rho}\,.\notag
\end{align}
Notice that: 
\begin{enumerate}[\textbullet]
\item 
the structure of $\mu$ in \eqref{mu-sol} for small $z$ is consistent with all equations in \eqref{mu-as};
\item
it yields real-valued $x(y,t)$ (and thus $\hat u(y,t)$ and $\hat q(y,t)$) for all $\alpha\in {\mathbb C}$:
\begin{subequations}   \label{u-sol}
\begin{align}\label{x-y-sol}
&x(y,t) = y - \frac{2}{\rho} \frac{\Re(\alpha\ee^{\frac{\ii\pi}{3}})}{1+\frac{2}{\rho}\Im\alpha},\\
\label{q-y-sol}
&\hat q(y,t) = 1+\frac{2}{\rho}\Im\alpha.
\end{align}
\end{subequations}
\end{enumerate}
Substituting \eqref{al} into \eqref{u-sol} gives the parametric representation for a candidate for one-soliton solution:
\begin{subequations}   \label{u-sol1}
\begin{align}\label{x-y-sol1}
&x(y,t) = y + N(y,t) = y + \frac{2\sqrt{3}}{\rho}\frac{-2\cos(\phi+\frac{\pi}{3})\hat e +\hat e^2}{1-4\cos(\phi+\frac{\pi}{3})\hat e +\hat e^2},\\
\label{u-y-sol1}
&\hat u(y,t) = \frac{\partial x(y,t)}{\partial t} = \frac{12}{\rho^2}\frac{\hat e(\cos(\phi+\frac{\pi}{3}) - \hat e + \cos(\phi+\frac{\pi}{3}) \hat e^2)}
{(1-4\cos(\phi+\frac{\pi}{3})\hat e +\hat e^2)^2},\\
\label{q-y-sol1}
&\hat q(y,t) = \frac{1-4\cos(\phi+\frac{\pi}{3})\hat e +\hat e^2}{1-4\cos(\phi-\frac{\pi}{3})\hat e +\hat e^2}.
\end{align}
\end{subequations}
Now, applying the condition \eqref{N-q} to \eqref{u-sol1} and introducing $\beta\coloneqq -2\cos(\phi+\frac{\pi}{3})$ and $\hat \beta \coloneqq -2 \cos(\phi-\frac{\pi}{3})$  reduces \eqref{N-q} to the equation
\begin{equation}\label{be-be}
-3\,\frac{\beta + 2\hat e + \beta \hat e^2}{1+2\beta \hat e + \hat e^2} = \beta-\hat \beta.
\end{equation}
Equation \eqref{be-be} becomes an identity (in $(y,t)$) only in two cases: 
\begin{enumerate}[(i)]
\item
$\beta=1$ and $\hat \beta = -2$,
\item
$\beta=-1$ and $\hat \beta = 2$.
\end{enumerate}
But in case (ii) we have 
\begin{subequations}   \label{u-sol-bad}
\begin{align}\label{x-y-sol-bad}
&x(y,t)  = y - \frac{2\sqrt{3}}{\rho}\frac{\hat e}{1-\hat e},\\
\label{u-y-sol-bad}
&\hat u(y,t)  = \frac{6}{\rho^2}\frac{\hat e}
{(1-\hat e)^2},
\end{align}
\end{subequations}
which corresponds to an unbounded $\hat u$ (at $y=-\frac{t}{\rho^2}-y_0$, where $\hat e (y,t)=1$) and a singular dependence $x=x(y,t)$. 

\begin{figure}[ht]
\hfill\begin{minipage}[t]{.49\textwidth}
\includegraphics[scale=1]{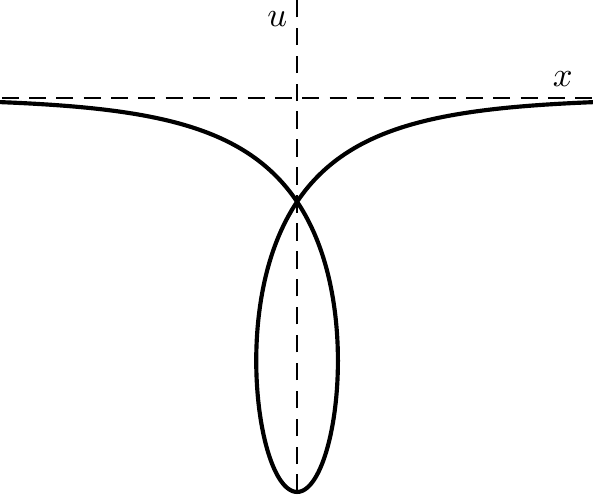}
\caption{The soliton in the $(x,t)$ variables.} 
\label{fig:u-x}
\end{minipage}
\begin{minipage}[t]{.49\textwidth}
\includegraphics[scale=1]{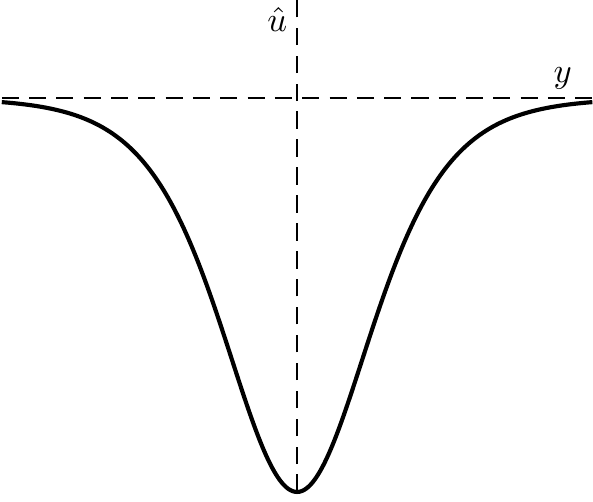}
\caption{The soliton in the $(y,t)$ variables.} 
\label{fig:u-y}
\end{minipage}
\end{figure}

On the other hand, for $\beta=1$ and $\hat \beta = -2$, which, in terms of $\phi$, corresponds to $\phi=\frac{\pi}{3}$, we have a bounded solution described as follows: 
\begin{subequations}   \label{u-sol1-1}
\begin{align}\label{x-y-sol1-1}
&x(y,t)  = y + \frac{2\sqrt{3}}{\rho}\frac{\hat e}{1+\hat e},\\
\label{u-y-sol1-1}
&\hat u(y,t)  = -\frac{6}{\rho^2}\frac{\hat e}
{(1+\hat e)^2},\\
\label{q-y-sol1-1}
&\hat q(y,t) = \frac{1+2\hat e +\hat e^2}{1-4\hat e +\hat e^2},
\end{align}
\end{subequations}
where $\hat e = \hat e(y,t)$ is given by \eqref{hat-e}.

Notice that the representation of the loop solitons in the form \eqref{u-sol1-1} coincide with that in \cite{M06} under an obvious change of notations.

One can directly verify that $q(x,t) = \hat q(y(x,t),t)$ and $u(x,t)=\hat u(y(x,t),t)$ determined by \eqref{u-sol1-1} satisfy the equation \eqref{ic} in the form of the system $q_t = -(uq)_x$, $u_{xx}=1-q^3$.

\begin{figure}[ht]
\includegraphics[scale=0.8]{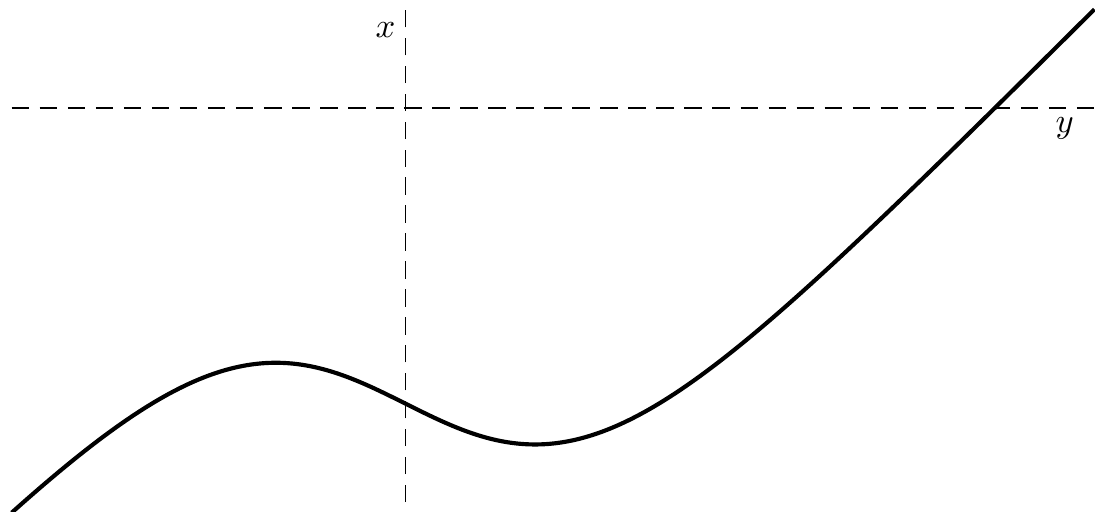}
\caption{The change of variables $y\mapsto x$.} 
\label{fig:x-y}
\end{figure}

\begin{rem}
In the variables $(y,t)$, the soliton solution \eqref{u-y-sol1-1} is a smooth function having a bell shape, see Fig.~\ref{fig:u-y} (corresponding to the change from Euler to Lagrange picture), which is typical for solitons of integrable nonlinear evolution equations. It is the change of variable $y\mapsto x$ \eqref{x-y-sol1-1}, which, for any fixed $t$, is not monotone (see Fig.~\ref{fig:x-y}), that makes the soliton in the original variables $(x,t)$ to be a multivalued function having a loop shape (see Fig.~\ref{fig:u-x}). The physical context of ambiguous (multi-valued) solutions is discussed in \cite{V92}.
\end{rem}

\section{Long time asymptotics} \label{sec:as}

The representation of the solution $u$ to the Cauchy problem \eqref{icc} in terms of an associated RH problem allows applying the nonlinear steepest descent method \cite{DZ93} for obtaining a detailed long time asymptotics of $u$. A key feature of this method is the deformation of the original RH problem according to the ``signature table'' for the phase functions in the jump matrix $S$ in \eqref{RH-y} or \eqref{RH}. Since the structure of the jump matrix is similar, in many aspects, to that in the case of the Degasperis--Procesi equation, the long time analysis shares many common features with that for the DP equation \cite{BmS13}. On the other hand, the short wave limit nature of \eqref{ic} suggests the presence of certain common issues with the case of the short wave Camassa--Holm equation \cite{BSZ11}.

First, consider the structure of the jump matrix $S$ \eqref{S-yt} for $z\in{\D{R}}$. The exponentials in the $(j,l)$ entry of $S$, which has the form \eqref{exp}, can be written as $\ee^{t\left(\zeta \nu +\frac{3}{\nu}\right)}$, where $\zeta = \frac{y}{t}$ and $\nu = \lambda_j(z)-\lambda_l(z) = z(\omega^j - \omega^l)$. Particularly, for the $(1,2)$ entry one has $\nu = \ii\sqrt{3} z$ and thus
\[
\ee^{\ii t \sqrt{3}\left(\zeta z -\frac{1}{z}\right)}\equiv\ee^{-2\ii t\Theta(\zeta,z)},
\]
where $\Theta(\zeta,z)$ has the same form, up to sign and scaling factors, as in the case of the short wave limit of the CH equation \cite{BSZ11}:
\[
\Theta(\zeta, z) = -\frac{\sqrt{3}}{2}\left(\zeta z - \frac{1}{z}\right).
\]
Therefore, the signature table, i.e., the distribution of signs of $\Im \Theta$ in the $z$-plane (near the real axis) is similar to that in \cite{BSZ11}. Particularly, 
\[
\Im \Theta = 0 \iff\Im z\left(\zeta + \frac{1}{|z|^2}\right)=0
\]
and thus, similarly to \cite{BSZ11}, two cases are to be distinguished.
\begin{enumerate}[(i)]
\item 
Case $\zeta\geq 0$. In this case the set $\{z\ | \Im \Theta(\zeta, z)=0\}$ coincides with the real axis $\Im z=0$ and $\pm\Im \Theta> 0$ for $\mp z>0$.
\item
Case $\zeta < 0$. In this case 
\[
\{z\ | \Im \Theta(\zeta, z)=0\} = \{z\ | \Im z=0\} \cup \{z\ |\ |z|=|\zeta|^{-1/2}\}.
\]
\end{enumerate}
Accordingly, the long time behavior of $u(x,t)$ turns out to be qualitatively different for $x>0$ and for $x<0$.

\subsection{Range $\BS{x/t>\varepsilon}$}

In the sector $x/t>\varepsilon$ for any $\varepsilon>0$, in order to move the oscillatory terms, in the jump relation, into regions where they are decaying \cite{DZ93}, the signature table suggests the deformation of the original RH problem according to trigonal factorizations of the jump matrix of type \eqref{S-SS-1}. In what follows, we will assume that $r(z)$ has an analytic extension to a small neighborhood of the real axis. This is, for example, the case if we assume that the solution is exponentially decaying as $|x|\to \infty$. Otherwise one can split $r(z)$ into an analytic part plus a reminder producing a polynomially decaying (in $t$) error term, the decay depending on the rate of decay of the initial condition $u_0(x)$ as $|x|\to \infty$ (see, e.g., \cites{BmKST09, GT09}).

For $z\in {\D{R}}$, writing $S(z)$ in the form 
\begin{equation}\label{S-factor}
S(z)=\begin{pmatrix}
1&0&0\\
-r(z)\ee^{2\ii t\Theta(\zeta,z)}&1&0\\
0&0&1
\end{pmatrix}
\begin{pmatrix}
1&\bar r(z)\ee^{-2\ii t\Theta(\zeta,z)}&0\\
0&1&0\\
0&0&1
\end{pmatrix},\quad z\in\D{R}
\end{equation}
suggests absorbing the triangular factors into the new RH problem, for $\mu^{(1)}$, involving a contour $\Sigma^1=\Sigma^1_+\cup \Sigma^1_-$ near the real axis, see Figure~\ref{sigma-h}, 
\begin{figure}[ht]
\centering\includegraphics[scale=1]{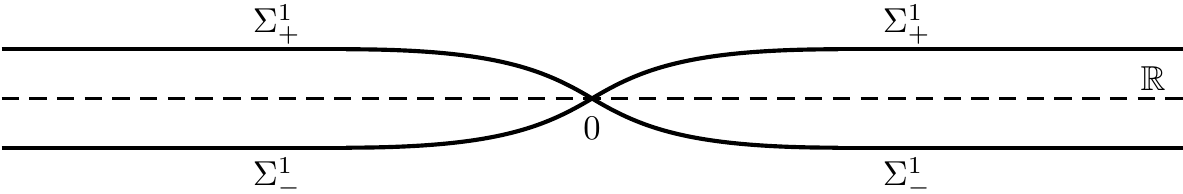}
\caption{Contour $\Sigma^1$ near the real axis} 
\label{sigma-h}
\end{figure}
which replaces the real axis in the original contour:
\[
\mu^{(1)} = \begin{cases}
\mu \begin{pmatrix}
1&0&0\\
-r(z)\ee^{2\ii t\Theta(\zeta,z)}&1&0\\
0&0&1
\end{pmatrix}, & z\ \text{between}\ \D{R} \ \text{and}\ \Sigma^1_-, \\
\mu \begin{pmatrix}
1&-\bar r(z)\ee^{-2\ii t\Theta(\zeta,z)}&0\\
0&1&0\\
0&0&1
\end{pmatrix}, & z\ \text{between}\ \D{R} \ \text{and}\ \Sigma^1_+, \\
\mu, &\text{everywhere else}.
\end{cases}
\]

Similarly, one absorbs the triangular factors associated with the factorizations
of $S$ on the lines $\omega\D{R}$ and $\omega^2\D{R}$, which, due to the symmetries (see 
Proposition~\ref{p-sym}), are as follows:
\begin{alignat*}{2}
S(z)&=\begin{pmatrix}
1&0&-r(\omega^2 z)\ee^{2\ii t\Theta(\zeta,\omega^2 z)}\\
0&1&0\\
0&0&1
\end{pmatrix}
\begin{pmatrix}
1&0&0\\
0&1&0\\
\bar r(\omega^2 z)\ee^{-2\ii t\Theta(\zeta,\omega^2 z)}&0&1
\end{pmatrix},&\quad&z\in \omega\D{R}
\intertext{and}
S(z)&=\begin{pmatrix}
1&0&0\\
0&1&\bar r(\omega z)\ee^{-2\ii t\Theta(\zeta,\omega z)}\\
0&0&1
\end{pmatrix}
\begin{pmatrix}
1&0&0\\
0&1&0\\
0&-r(\omega z)\ee^{2\ii t\Theta(\zeta,\omega z)}&1
\end{pmatrix},&&z\in\omega^2\D{R}.
\end{alignat*}
This reduces the RH problem to a new one, on the contour $\hat\Sigma=\Sigma^1\cup
\omega \Sigma^1 \cup \omega^2 \Sigma^1$, see Figure~\ref{sigma-h2}, 
\begin{figure}[ht]
\includegraphics[scale=1]{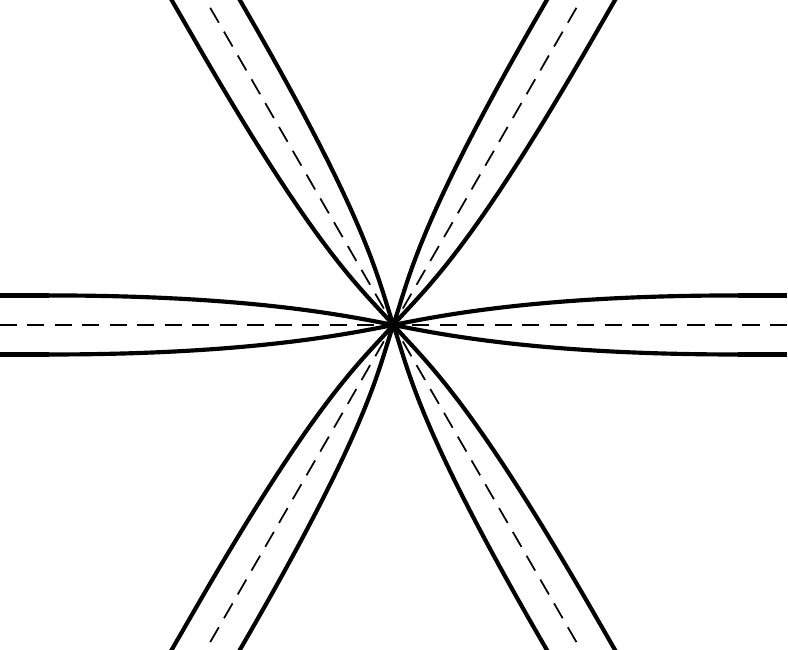}
\caption{Contour $\hat\Sigma$ for $\zeta>0$.} 
\label{sigma-h2}
\end{figure}
whose jump matrix is exponentially decaying (in $t$) to the identity matrix. Assuming there are no residue conditions, the solution to this problem decay fast to $I$ and consequently $\hat u(y,t)$ decay fast to $0$ (while $y$ approaches fast $x$), uniformly in any sector $x/t>\varepsilon$ with $\varepsilon>0$ (see \cite{BSZ11}), and thus $u(x,t)=\ord(t^{-n})$ with $n>1$ depending on the decay of $u_0(x)$ through the decay of the non-analytic reminder in the analytic approximation of the reflection coefficient \cite{GT09}.

\subsection{Range $\BS{x/t<-\varepsilon}$}

In  the sector $\zeta<-\varepsilon$ for any $\varepsilon>0$, the signature table
dictates the use of two types of factorizations of the jump matrix. 
Again consider first the real axis. 
\begin{enumerate}[\textbullet]
\item 
For $z\in (-\varkappa,\varkappa)$, where $\varkappa\equiv\varkappa(\zeta)=1/\sqrt{\abs{\zeta}}$, we use again the factorization \eqref{S-factor}.
\item
For $z\in (-\infty,-\varkappa)\cup (\varkappa, \infty)$, the appropriate factorization is as follows:
\begin{equation}\label{S-factor-1}
S(z)=\begin{pmatrix}
1&\frac{\bar r(z) \ee^{-2\ii t\Theta(\zeta,z)}}{1-\abs{r(z)}^2}&0\\
0&1&0\\
0&0&1
\end{pmatrix}
\begin{pmatrix}
\frac{1}{1-\abs{r(z)}^2} & 0 & 0\\
0 & 1-\abs{r(z)}^2 & 0\\
0&0&1
\end{pmatrix}
\begin{pmatrix}
1&0&0\\
-\frac{r(z)\ee^{-2\ii t\Theta(\zeta,z)}}{1-\abs{r(z)}^2}&1&0\\
0&0&1
\end{pmatrix}.
\end{equation}
Taking into account the symmetries, the factorization \eqref{S-factor-1} suggests introducing the diagonal factor $\tilde \delta(z) \equiv \tilde \delta(\zeta,z)$ (see  \cite{BmS13}):
\begin{equation}   \label{tilde.delta}
\tilde\delta(z)=
\begin{pmatrix}
\delta(z)\delta^{-1}(\omega^2z)&0&0\\
0&\delta^{-1}(z)\delta(\omega z)&0\\
0&0&\delta(\omega^2z)\delta^{-1}(\omega z)
\end{pmatrix}
\end{equation}
where
\begin{equation}\label{delta}
\delta(z)\equiv\delta(\zeta,z)=\exp\left\lbrace\frac{1}{2\ii\pi}\left(\int_{-\infty}^{-\varkappa(\zeta)}+
\int_{\varkappa(\zeta)}^{\infty}\right)\frac{-\log(1-\abs{r(s)}^2)}{s-z}\,\dd s\right\rbrace.
\end{equation}
\end{enumerate}
Then, introducing $\mu^{(1)}\coloneqq\mu\tilde\delta^{-1}(z)$, the jump conditions for $\mu^{(1)}$ are
\[
\mu_+^{(1)}=\mu_-^{(1)}S_1(\zeta,t,z)
\]
with $S_1$ possessing appropriate triangular factorizations with non-diagonal terms decaying to $0$ upon deforming the contour $\Sigma$ to a contour $\hat\Sigma$ close to $\Sigma$, with self-intersection points at $0$, $\pm\varkappa$, $\pm\omega\varkappa$, $\pm\omega^2\varkappa$, see Figure~\ref{sigma-h3}. 
\begin{figure}[ht]
\centering\includegraphics[scale=1]{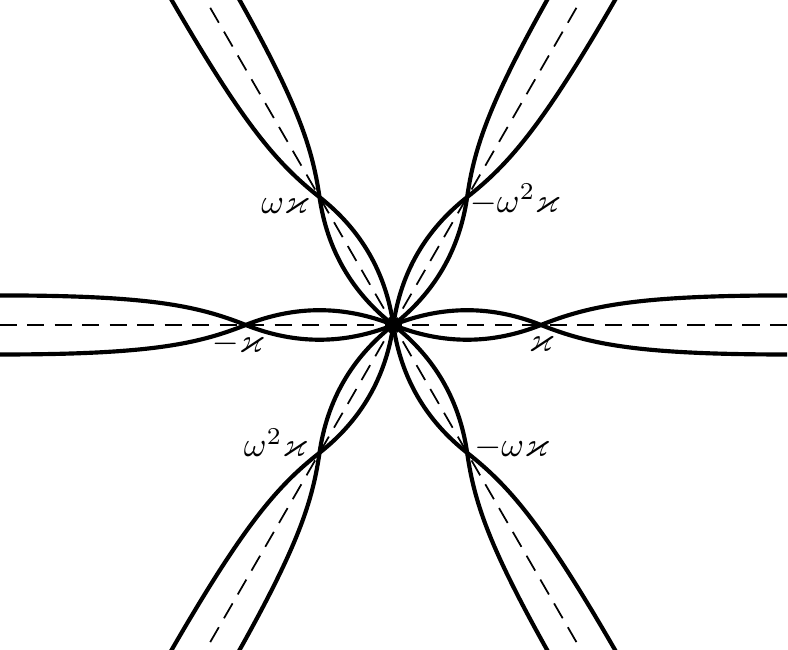}
\caption{Contour $\hat\Sigma$ for $\zeta<0$.} 
\label{sigma-h3}
\end{figure}

\noindent
For instance, for $z\in\D{R}$ (before the deformation) one has
\begin{subequations} \label{eq:jump}
\begin{align}
S_1(\zeta,t,z)&=
\begin{pmatrix}
1&\frac{\bar r(z)}{1-\abs{r(z)}^2}\,\frac{\delta_-^2(z)}{\delta(\omega^2z)\delta(\omega z)}\,\ee^{-2\ii t\Theta(\zeta,z)}&0\\
0&1&0\\
0&0&1
\end{pmatrix}\!\!\!
\begin{pmatrix}
1&0&0\\
\frac{r(z)}{1-\abs{r(z)}^2}\,\frac{\delta(\omega^2z)\delta(\omega z)}{\delta_+^2(z)}\,\ee^{2\ii t\Theta(\zeta, z)}&1&0\\
0&0&1
\end{pmatrix}\notag\\
&\qquad\text{for }z\in(-\infty,-\varkappa)\cup(\varkappa,\infty)
\shortintertext{and}
S_1(\zeta,t,z)&=
\begin{pmatrix}
1&0&0\\
-r(z)\frac{\delta(\omega^2z)\delta(\omega z}{\delta^2(z)}\,\ee^{2\ii t\Theta(\zeta, z)}&1&0\\
0&0&1
\end{pmatrix}\!\!\!
\begin{pmatrix}
1&-\bar r(z)\frac{\delta^2(z)}{\delta(\omega^2z)\delta(\omega z)}\,
\ee^{-2\ii t\Theta(\zeta,z)}&0\\
0&1&0\\
0&0&1
\end{pmatrix}\notag\\
&\qquad\text{for }z\in(-\varkappa,\varkappa).
\end{align}
\end{subequations}
Similarly for $z\in\omega\,\D{R}$ and $z\in\omega^2\,\D{R}$.

Then, absorbing the triangular factors from \eqref{eq:jump} (and the analogous factors from the factorizations on $\omega\,\D{R}$ and $\omega^2\,\D{R}$), the main contribution to the long time asymptotics comes from the sum of (separate) contributions of each small cross (after the contour deformation) centered at $\pm\varkappa$, $\pm\omega\varkappa$, and $\pm\omega^2\varkappa$ (see \cites{DZ93,BmKST09,BmS13}):
\begin{equation}   \label{eq:mu-as}
\mu(z)=\begin{pmatrix}1&1&1\end{pmatrix}\biggl(\sum_{j=1}^{6}M^{(j)}(z)-5\cdot I\biggr)
\tilde\delta(k)+\ord\left(t^{-\alpha}\right)\quad\text{as }t\to\infty,
\end{equation}
with some $\alpha>1/2$, where $M^{(j)}(z)$ are the solutions of the $3\times 3$ matrix RH problems on the crosses centered at $\pm\varkappa$, $\pm\omega\varkappa$, and $\pm\omega^2\varkappa$.

Particularly, for $M^{(1)}(z)$ associated with the cross centered at $\varkappa$, the factors in the jump matrix \eqref{eq:jump} can be approximated, as $t\to\infty$, as follows (see \cites{BmS08b,BmKST09, BmS13}):
\begin{equation}   \label{approx.factors}
\delta^{2}(z)\,\delta^{-1}(\omega^2z)\,\delta^{-1}(\omega z)\,\ee^{-2\ii t\Theta(\zeta,z)}
\simeq\delta_{\ast}^{2}\cdot(-\hat z)^{2\ii h}\ee^{-\ii\hat z^2/2},
\end{equation}
where 
\begin{equation}   \label{h}
h=h(\varkappa)=-\frac{1}{2\pi}\log(1-\abs{r(\varkappa)}^2),
\end{equation}
and where the scaled spectral parameter $\hat z$ is defined by
\[
\hat z=\sqrt{c t}\bigl(z-\varkappa\bigr)
\]
with
\begin{equation}   \label{c}
c=c(\varkappa)=\frac{2\sqrt{3}}{\varkappa^3}\,.
\end{equation}
Moreover, the constant (w.r.t.~$\hat z$) factor $\delta_{\ast}$ is as follows:
\begin{subequations}\label{param}
\begin{align}
\delta_{\ast}&=\ee^{\frac{-\ii\sqrt{3}}{\varkappa}t}\left(\frac{8\sqrt{3}}{\varkappa}t\right)^{-\ii h/2}
\ee^{-\ii\chi_0},
\shortintertext{where}
\chi_0&=\frac{1}{2\pi}\left(\int_{-\infty}^{-\varkappa}+\int_{\varkappa}^{\infty}\right)
\log\abs{\varkappa-s}\dd\log(1-\abs{r(s)}^2)\notag\\
&\quad+\frac{1}{4\pi}\left(\int_{-\infty}^{-\varkappa}+\int_{\varkappa}^{\infty}\right)
\log(1-\abs{r(s)}^2)\,\frac{2s+\varkappa}{s^2+s\varkappa + \varkappa^2}\,\dd s.
\end{align}
\end{subequations}
Similarly, for $z$ near $-\varkappa$, one has
\begin{equation}   \label{approx.factors-1}
\delta^{2}(z)\,\delta^{-1}(\omega^2z)\,\delta^{-1}(\omega z)\,\ee^{-2\ii t\Theta(\zeta,z)}
\simeq\bar\delta_{\ast}^{2}\cdot(\hat z)^{-2\ii h}\ee^{\ii\hat z^2/2},
\end{equation}
where now 
\[
\hat z=\sqrt{c t}(z+\varkappa).
\]
Conjugating out the constant factors in \eqref{approx.factors} and \eqref{approx.factors-1}, the resulting problems on the crosses (in the $\hat z$ plane) become RH problems whose solutions are given in terms of parabolic cylinder functions \cites{DZ93,BmS08b,BmKST09, BmS13}. 
Particularly, $M^{(1)}(z)\approx\Delta_{\one}\hat M^{(1)}\Delta_{\one}^{-1}$ with $\Delta_{\one}=\diag\accol{\delta_{\ast},\delta_{\ast}^{-1},1}$, where the large-$\hat z$ behavior of 
$\hat M^{(1)}(\hat z)$ is given by 
\begin{align*}
\hat M^{(1)}(\hat z)&=I+\frac{\hat M_1}{\hat z}+\ord(\hat z^{-2}),
\shortintertext{where}
\hat M_1&=
\begin{pmatrix}
0&\ii\bar\beta&0\\
-\ii\beta&0&0\\
0&0&0\\
\end{pmatrix}
\end{align*}
with
\begin{equation}   \label{be}
\beta=\frac{r(\varkappa)\Gamma(-\ii h)h}{\sqrt{2\pi}\,\ee^{\ii\pi/4}\ee^{-\pi h/2}}\,.
\end{equation}
Here $\Gamma$ is the Euler Gamma function.

Recalling the relationship $\hat z=\sqrt{c t}\bigl(z-\varkappa\bigr)$, the evaluation of the main term of $M^{(1)}(z)$ as $t\to\infty$, for $z$ close to $0$,
reduces to the following (see \cite{BSZ11}):
\begin{subequations}  \label{eq:m-onesix}
\begin{align} \label{M1}
M^{(1)}(z)&\simeq\Delta_{\one}\biggl(I+\frac{\hat M_1}{\sqrt{c t}\,
	(z-\varkappa)}\biggr)\Delta_{\one}^{-1}\notag\\
&=I+\frac{\vcheck M_1}{\sqrt{c t}\,(-z+\varkappa)}\,,
\shortintertext{where}  \label{M1check}
\vcheck M_1&=\begin{pmatrix}
0&-\ii\bar\beta\delta_{\ast}^2&0\\
\ii\beta\delta_{\ast}^{-2}&0&0\\
0&0&0
\end{pmatrix}.
\end{align}
Similarly, for $M^{(2)}(z)$ associated with the cross centered at $-\varkappa$, one has
\begin{equation}  \label{M2}
M^{(2)}(z)=I+\frac{\vcheck M_2}{\sqrt{ct}(z+\varkappa)}\,,
\end{equation}
where
\[
\vcheck M_2= \begin{pmatrix}
0&\ii\beta^*\delta_{\ast}^{-2}&0\\
-\ii\bar\beta^*\delta_{\ast}^{2}&0&0\\
0&0&0
\end{pmatrix}
\]
with 
\begin{equation}   \label{be-s}
\beta^*=\frac{r(-\varkappa)\Gamma(-\ii h)h}{\sqrt{2\pi}\,\ee^{\ii\pi/4}\ee^{-\pi h/2}}\,.
\end{equation}

For the crosses $M^{(j)}$, $j=3,\dots,6$ centered at $\pm\omega\varkappa$ and  $\pm\omega^2\varkappa$, 
 one applies the symmetries of Proposition~\ref{p-sym}, which gives the following.
\begin{equation}  \label{M3}
M^{(3)}(z)\simeq I+\frac{1}{\sqrt{ct}(-z\omega^2+\varkappa)}
\begin{pmatrix}
0&0&\ii\beta\delta_{\ast}^{-2}\\
0&0&0\\
-\ii\bar\beta\delta_{\ast}^{2}&0&0
\end{pmatrix},
\end{equation}
\begin{equation}  \label{M4}
M^{(4)}(z)\simeq I+\frac{1}{\sqrt{ct}(z\omega^2+\varkappa)}
\begin{pmatrix}
0&0&-\ii\bar\beta^*\delta_{\ast}^{2}\\
0&0&0\\
\ii\beta^*\delta_{\ast}^{-2}&0&0
\end{pmatrix},
\end{equation}
\begin{equation}  \label{M5}
M^{(5)}(z)\simeq I+\frac{1}{\sqrt{ct}(-z\omega+\varkappa)}
\begin{pmatrix}
0&0&0\\
0&0&-\ii\bar\beta\delta_{\ast}^{2}\\
0&\ii\beta\delta_{\ast}^{-2}&0
\end{pmatrix},
\end{equation}
\begin{equation}  \label{M6}
M^{(6)}(z)\simeq I+\frac{1}{\sqrt{ct}(z\omega+\varkappa)}
\begin{pmatrix}
0&0&0\\
0&0&\ii\beta^*\delta_{\ast}^{-2}\\
0&-\ii\bar\beta^*\delta_{\ast}^{2}&0
\end{pmatrix}.
\end{equation}
\end{subequations}
Collecting all relations \eqref{eq:m-onesix} and substituting into \eqref{eq:mu-as} gives
\begin{equation}  \label{mu3}
\mu_3(z)\simeq \tilde\delta_{33}(z)\left\{1+\frac{A_1}{-z\omega^2+\varkappa}
+\frac{\bar A_1}{-z\omega+\varkappa}
+\frac{A_2}{z\omega+\varkappa}
+\frac{\bar A_2}{z\omega^2+\varkappa}
\right\},
\end{equation}
where 
\begin{subequations}  \label{A}
\begin{align}
A_1&=A_1(\zeta)=\ii\beta\delta_{\ast}^{-2},\\
A_2&=A_2(\zeta)=\ii\beta^*\delta_{\ast}^{-2}.
\end{align}
\end{subequations}
In view of \eqref{u}, in order to evaluate the asymptotics of  $u$,
we use expansions, as $z\to 0$, of the r.h.s. of \eqref{mu3} (see \cite{BSZ11}).
From \eqref{tilde.delta} and \eqref{delta} it follows that 
\begin{equation}  \label{delta-as}
\tilde\delta_{33}(z) = 1 + z \Delta + \ord(z^2),
\end{equation}
where 
\begin{equation}  \label{Delta}
\Delta = \Delta(\varkappa(\zeta)) = \frac{\sqrt{3}}{\pi}\int_{\varkappa}^{\infty}\frac{\log\bigl(1-\abs{r(s)}^2\bigr)}{s^2}\dd s,
\end{equation}
and thus
\begin{equation}  \label{mu3-x}
\frac{\mu_3(z)-\mu_3(0)}{\mu_3(0) z}=\Delta + \frac{2}{\varkappa^2\sqrt{ct}}\left(\Re(A_1\omega^2)-\Re(A_2\omega)\right)+\ord\left(t^{-\alpha}\right).
\end{equation}
Using \eqref{h}, \eqref{c}, \eqref{param}, \eqref{be}, \eqref{be-s} in \eqref{A}
yields the following expressions for the terms in the r.h.s.\ of \eqref{mu3-x}:
\begin{subequations}  \label{A-1}
\begin{align}
\Re(A_1\omega^2)&=\sqrt{h}\cos\left(\frac{2\sqrt{3}}{\varkappa}t+h\log t +\phi_1\right),\\
\Re(A_2\omega)&=\sqrt{h}\cos\left(\frac{2\sqrt{3}}{\varkappa}t+h\log t +\phi_2\right),
\end{align}
\end{subequations}
where 
\begin{equation}  \label{phi}
\phi_j=h\log\frac{8\sqrt{3}}{\varkappa} + \arg r\bigl((-1)^{j-1}\varkappa\bigr) +\arg \Gamma(-\ii h)+2\chi_0 +\frac{\pi}{4} +(-1)^j \frac{2\pi}{3}, \quad j=1,2.
\end{equation}
Substituting \eqref{A-1} with \eqref{phi} into \eqref{mu3-x} gives, in view of \eqref{x-y}, the principal term of the asymptotics for $x(y,t)$:
\begin{equation}  \label{x-as}
x(y,t)=y+\Delta + \frac{\tilde c_1}{\sqrt{t}}\sin\left(c_2 t + c_3\log t+ \tilde c_4\right)+\ord\left(t^{-\alpha}\right),
\end{equation}
where the error term is uniform in the sector $x/t<-\varepsilon$. The coefficients $c_j$ and $\tilde c_j$ are, similarly to $\Delta$, functions of $\varkappa$:
\begin{subequations}  \label{c-j}
\begin{align}
\tilde c_1 & = - 2\left(\frac{4}{3}\right)^{1/4}\sqrt{\frac{h}{\varkappa}}\sin\left(\frac{\arg r(\varkappa)- \arg r(-\varkappa)}{2}-\frac{2\pi}{3}\right), \\
c_2&= \frac{2\sqrt{3}}{\varkappa}, \\
c_3&= h,\\
\tilde c_4 &= h\log\frac{8\sqrt{3}}{\varkappa} + \frac{\arg r(\varkappa)+ \arg r(-\varkappa)}{2}+\arg \Gamma(-\ii h) + 2\chi_0 + \frac{\pi}{4}.
\end{align}
\end{subequations}

Now the asymptotics of $u(x,t)$ can be calculated by differentiating \eqref{x-as} with respect to $t$ (keeping $y$ fixed; see \eqref{u-y}) and taking into account the change of variables $y\mapsto x$, which (asymptotically) results in an additional phase shift. This results in the following

\begin{thm} \label{thm:asymptotics}
Let $u(x,t)$ be the solution of the Cauchy problem \eqref{icc}. Then the behavior of $u$ as $t\to\infty$ is as follows. Let $\varepsilon$ be any small positive number.
\begin{enumerate}[\rm(i)]
\item
In the domain $x/t>\varepsilon$, $u(x,t)$ tends to $0$ with fast decay:
\[
u(x,t)=\ord(t^{-n})\text{ for some }n>1.
\]
\item
In the domain $x/t<-\varepsilon$, $u(x,t)$ exhibits decaying, of order $\ord(t^{-1/2})$, modulated oscillations:
\begin{equation}\label{u-as}
u(x,t)=\frac{c_1}{\sqrt{t}}\,\cos(c_2t+c_3\log t+c_4)+\ord\left(t^{-\alpha}\right)
\end{equation}
with some $\alpha>1/2$. The coefficients $c_j$ are functions of $x/t$ given in terms of the associated spectral function $r(z)$:
\begin{align*}
c_1&=-2^{\frac{3}{2}} 3^{\frac{1}{4}}\sqrt{\frac{h}{\vcheck\varkappa^3}}\sin\left(\frac{\arg r(\vcheck\varkappa)- \arg r(-\vcheck\varkappa)}{2}-\frac{2\pi}{3}\right),\\
c_2&=\frac{2\sqrt{3}}{\vcheck\varkappa},\qquad c_3=h,\\
c_4&=\tilde c_4(\vcheck\varkappa)+\sqrt{3}\vcheck\varkappa\Delta(\vcheck\varkappa)\notag\\
&=h\log\frac{8\sqrt{3}}{\vcheck\varkappa}+\frac{\arg r(\vcheck\varkappa)+\arg r(-\vcheck\varkappa)}{2}+\arg \Gamma(-\ii h)\notag\\
&\quad+\frac{\pi}{4}+\frac{3\vcheck\varkappa}{\pi}\int_{\vcheck\varkappa}^\infty\frac{\log\bigl(1-\abs{r(s)}^2\bigr)}{s^2}\dd s \notag\\
&\quad+\frac{1}{\pi}\left(\int_{-\infty}^{-\vcheck\varkappa}+\int_{\vcheck\varkappa}^{\infty}\right)\log\abs{\vcheck\varkappa-s}\dd\log(1-\abs{r(s)}^2)\notag\\
&\quad+\frac{1}{2\pi}\left(\int_{-\infty}^{-\vcheck\varkappa}+\int_{\vcheck\varkappa}^{\infty}\right)\frac{\log(1-\abs{r(s)}^2)(2s+\vcheck\varkappa)}{s^2+s\vcheck\varkappa + \vcheck\varkappa^2}\,\dd s,
\intertext{with}
h&=h(\vcheck\varkappa)=-\frac{1}{2\pi}\log(1-\abs{r(\vcheck\varkappa)}^2)\text{ and }\vcheck\varkappa=\sqrt{\frac{t}{\abs{x}}}.
\end{align*}
\end{enumerate}
\end{thm}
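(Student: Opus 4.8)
The plan is to extract the large-$t$ behaviour of $u$ from the parametric representation \eqref{u}, that is, from the expansion of $\mu_3(y,t,z)$ as $z\to0$, and to obtain that expansion by a Deift--Zhou nonlinear steepest descent analysis of the vector RH problem \eqref{RH}--\eqref{vec-norm}. Every deformation is dictated by the phase $\Theta(\zeta,z)$ and its signature table computed above; because the sign pattern of $\Im\Theta$ is qualitatively different for $\zeta\ge0$ and $\zeta<0$, the two ranges $x/t>\varepsilon$ and $x/t<-\varepsilon$ call for genuinely different deformations, which is the source of the dichotomy (i)/(ii). Throughout I assume, as stated, that $r$ admits an analytic continuation to a strip (otherwise one first splits off a non-analytic remainder producing the power-like error term).

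For part (i), in the range $x/t>\varepsilon$ one has $\zeta>0$, so $\{\Im\Theta=0\}$ is exactly the real axis with $\pm\Im\Theta>0$ for $\mp z>0$, together with its rotations on $\omega\D R$ and $\omega^2\D R$. I would use the factorization \eqref{S-factor} and its rotated copies to absorb the triangular factors into a new piecewise-analytic function on the contour $\hat\Sigma$ of Figure~\ref{sigma-h2}; off the contour the exponentials $\ee^{\pm2\ii t\Theta}$ are then exponentially small in $t$, so the deformed jump matrix is exponentially close to $I$ and the RH solution inherits the same decay. Inserting this into \eqref{x-y}--\eqref{u-y} gives $x-y\to0$ and $u\to0$ faster than any power of $t$; the precise exponent $t^{-n}$, $n>1$, is fixed by the non-analytic remainder in the analytic approximation of $r$.

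For part (ii), in the range $x/t<-\varepsilon$ ($\zeta<0$) the set $\{\Im\Theta=0\}$ additionally contains the circle $\abs{z}=\varkappa$ with $\varkappa=\abs{\zeta}^{-1/2}$, creating six stationary phase points $\pm\varkappa$, $\pm\omega\varkappa$, $\pm\omega^2\varkappa$. The program follows the standard steepest-descent steps: first conjugate by the diagonal factor $\tilde\delta$ of \eqref{tilde.delta}--\eqref{delta} so that the two factorizations \eqref{S-factor} and \eqref{S-factor-1} become ones whose off-diagonal entries decay once the contour is opened to $\hat\Sigma$ (Figure~\ref{sigma-h3}); then localize, so that to leading order the jump survives only on six small crosses and the problem decouples into six local models; then solve each local model explicitly in terms of parabolic cylinder functions, giving \eqref{M1}--\eqref{M6} with $\beta$, $\beta^{\ast}$ as in \eqref{be}, \eqref{be-s}. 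Assembling these contributions via \eqref{eq:mu-as} yields \eqref{mu3}, and expanding its right-hand side as $z\to0$ with the help of \eqref{delta-as}--\eqref{Delta} produces \eqref{mu3-x} and hence the asymptotics \eqref{x-as} of $x(y,t)$, uniformly in the sector.

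It then remains to pass from $x(y,t)$ to $u(x,t)$, and this is the step I expect to be the real obstacle, since (in contrast with KdV or NLS) the field is not read off directly from the RH solution. By \eqref{u-y}, $\hat u=\partial x/\partial t$ at fixed $y$: differentiating \eqref{x-as}, the leading contribution comes from the oscillatory term and, to leading order, multiplies its amplitude by the effective frequency $\sqrt3/\varkappa$ while turning $\sin$ into $\cos$, which is what converts $\tilde c_1$ into $c_1$. Finally one inverts \eqref{x-as} as $y=x-\Delta+\osmall(1)$ and substitutes; this replaces $\varkappa=\sqrt{t/\abs{y}}$ by $\vcheck\varkappa=\sqrt{t/\abs{x}}$ and, through the expansion $\abs{y}^{1/2}=\abs{x}^{1/2}+\ldots$ inside the phase $c_2t$, injects the extra shift $\sqrt3\,\vcheck\varkappa\,\Delta(\vcheck\varkappa)$ into $c_4$. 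Besides the routine but delicate uniform error bounds in the localization and parabolic-cylinder matching (which secure the $\ord(t^{-\alpha})$ term with $\alpha>1/2$), the point requiring care is precisely this change of variables: one must verify that the deterministic $\Delta$-shift is the only $\ord(1)$ effect on the phase, while the oscillatory part of the correction $y\mapsto x$ feeds back only at the subleading order $\ord(t^{-\alpha})$.
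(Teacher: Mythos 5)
Your proposal follows essentially the same route as the paper: the identical signature-table dichotomy and triangular factorizations for the two sectors, the $\tilde\delta$-conjugation and reduction to six parabolic-cylinder crosses for $x/t<-\varepsilon$, assembly into \eqref{mu3} and the expansion at $z=0$ giving \eqref{x-as}, and finally differentiation in $t$ at fixed $y$ combined with the change of variables $y\mapsto x$. Your accounting of that last step --- the factor $\sqrt{3}/\varkappa$ converting $\tilde c_1$ into $c_1$ and the shift $\sqrt{3}\,\vcheck\varkappa\,\Delta(\vcheck\varkappa)$ entering $c_4$ through $\abs{y}^{1/2}=\abs{x}^{1/2}+\ldots$ in the phase --- is precisely the ``additional phase shift'' the paper invokes.
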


\begin{rems}
The error terms in Theorem \ref{thm:asymptotics} are uniform in the sectors $x/t>\varepsilon$ and $x/t<-\varepsilon$. The matching of the asymptotics for positive and negative values of $x$ is provided by the fast decay of the amplitude $c_1$ in the sector $x/t<-\varepsilon$ as $t/|x|\to \infty$. Indeed, in this limit, the critical point $\varkappa=\sqrt{t/|x|}$ is growing and thus the factor $h=h(\vcheck\varkappa)$ in $c_1$ is decaying to $0$ as fast as the reflection coefficient $r(\vcheck\varkappa)$ is, the latter depending on the smoothness and decay of the initial condition $u_0(x)$. 

Here one can see an analogy with matching the asymptotics for, e.g., the modified Korteweg-de Vries (mKdV) equation, where a similar behavior of the critical points takes place when $x/t$  is approaching $-\infty$, see \cite{DZ93}.  

In this respect we notice that in the case of the mKdV equation, there also exists a specific transition zone matching, for small $x$, the soliton sector and the sector of modulated oscillations \cite{DZ93} (the latter is similar to the sector $x/t<-\varepsilon$ for the Ostrovsky--Vakhnenko equation). In this zone, the main asymptotic term is expressed in terms of a solution of the Painlev\'e II equation. A similar transition zone exists for the Degasperis--Procesi equation (for small $x/t-3$) and for the Camassa--Holm equation (for small $x/t-2$), see \cite{BIS10}. On the other hand, for short-wave approximations of these equations, i.e., for the short-wave model for the CH equation \cite{BSZ11} and the Ostrovsky--Vakhnenko equation, Painlev\'e zones are not present. The appearance of a Painlev\'e zone in the asymptotics of nonlinear equations is indeed characterized by two factors: 
\begin{enumerate}[i)]
\item
At the corresponding point ($x=0$ for mKdV and KdV, $x/t=2$ for CH, $x/t=3$ for DP), there is a bifurcation in the signature table for the associated RH problem.
\item
The value of the reflection coefficient at the corresponding point $\kappa$ is non-zero, so one can define a nontrivial solution of the Painlev\'e II equation (w.r.t.\ $s$) having the asymptotics $r(\kappa)\Ai(s)$ as $s\to +\infty$, where $\Ai(s)$ is the Airy function.
\end{enumerate}
In the case of short-wave equations, none of these properties is satisfied in the seemingly analogous zones adjacent to the sectors of slow decaying, modulated oscillations.
\end{rems}

\section*{Acknowledgments}
D.Sh.\ would like to express his appreciation of the kind hospitality of the University Paris Diderot, where this research was initiated. The work was supported in part by the grant ``Network of Mathematical Research 2013--2015''.

\begin{bibdiv}
\begin{biblist}
\bib{BC}{article}{
   author={Beals, R.},
   author={Coifman, R. R.},
   title={Scattering and inverse scattering for first order systems},
   journal={Comm. Pure Appl. Math.},
   volume={37},
   date={1984},
   number={1},
   pages={39--90},
}
\bib{BDT87}{article}{
   author={Beals, R.},
   author={Deift, Percy},
   author={Tomei, Carlos},
   title={Inverse scattering for selfadjoint $n$th order differential
   operators on the line},
   conference={
      title={Differential equations and mathematical physics (Birmingham,
      Ala., 1986)},
   },
   book={
      series={Lecture Notes in Math.},
      volume={1285},
      publisher={Springer},
      place={Berlin},
   },
   date={1987},
   pages={26--38},
}
\bib{BIS10}{article}{
   author={Boutet de Monvel, Anne},
   author={Its, Alexander},
   author={Shepelsky, Dmitry},
   title={Painlev\'e-type asymptotics for the Camassa-Holm equation},
   journal={SIAM J. Math. Anal.},
   volume={42},
   date={2010},
   number={4},
   pages={1854--1873},
}
\bib{BmKST09}{article}{
   author={Boutet de Monvel, Anne},
   author={Kostenko, Aleksey},
   author={Shepelsky, Dmitry},
   author={Teschl, Gerald},
   title={Long-time asymptotics for the Camassa-Holm equation},
   journal={SIAM J. Math. Anal.},
   volume={41},
   date={2009},
   number={4},
   pages={1559--1588},
}
\bib{BmS08b}{article}{
   author={Boutet de Monvel, Anne},
   author={Shepelsky, Dmitry},
   title={Long-time asymptotics of the Camassa-Holm equation on the line},
   conference={
      title={Integrable systems and random matrices},
   },
   book={
      series={Contemp. Math.},
      volume={458},
      publisher={Amer. Math. Soc.},
      place={Providence, RI},
   },
   date={2008},
   pages={99--116},
}
\bib{BmS08a}{article}{
   author={Boutet de Monvel, Anne},
   author={Shepelsky, Dmitry},
   title={Riemann-Hilbert problem in the inverse scattering for the
   Camassa-Holm equation on the line},
   conference={
      title={Probability, geometry and integrable systems},
   },
   book={
      series={Math. Sci. Res. Inst. Publ.},
      volume={55},
      publisher={Cambridge Univ. Press},
      place={Cambridge},
   },
   date={2008},
   pages={53--75},
}
\bib{BmS08c}{article}{
   author={Boutet de Monvel, Anne},
   author={Shepelsky, Dmitry},
   title={The Camassa-Holm equation on the half-line: a Riemann-Hilbert
   approach},
   journal={J. Geom. Anal.},
   volume={18},
   date={2008},
   number={2},
   pages={285--323},
}
\bib{BmS13}{article}{
   author={Boutet de Monvel, Anne},
   author={Shepelsky, Dmitry},
   title={A Riemann-Hilbert approach for the Degasperis-Procesi equation},
   journal={Nonlinearity},
   volume={26},
   date={2013},
   number={7},
   pages={2081--2107},
}  
\bib{BSZ11}{article}{
   author={Boutet de Monvel, Anne},
   author={Shepelsky, Dmitry},
   author={Zielinski, Lech},
   title={The short-wave model for the Camassa-Holm equation: a
   Riemann-Hilbert approach},
   journal={Inverse Problems},
   volume={27},
   date={2011},
   number={10},
   pages={105006, 17},
}
\bib{B05}{article}{
   author={Boyd, John P.},
   title={Ostrovsky and Hunter's generic wave equation for weakly dispersive
   waves: matched asymptotic and pseudospectral study of the paraboloidal
   travelling waves (corner and near-corner waves)},
   journal={European J. Appl. Math.},
   volume={16},
   date={2005},
   number={1},
   pages={65--81},
}
\bib{BC02}{article}{
   author={Boyd, John P.},
   author={Chen, Guan-Yu},
   title={Five regimes of the quasi-cnoidal, steadily translating waves of
   the rotation-modified Korteweg-de Vries (``Ostrovsky'') equation},
   journal={Wave Motion},
   volume={35},
   date={2002},
   number={2},
   pages={141--155},
}
\bib{BS13}{article}{
   author={Brunelli, J. C.},
   author={Sakovich, S.},
   title={Hamiltonian structures for the Ostrovsky-Vakhnenko equation},
   journal={Commun. Nonlinear Sci. Numer. Simul.},
   volume={18},
   date={2013},
   number={1},
   pages={56--62},
}
\bib{CDG76}{article}{
   author={Caudrey, P. J.},
   author={Dodd, R. K.},
   author={Gibbon, J. D.},
   title={A new hierarchy of Korteweg-de Vries equations},
   journal={Proc. Roy. Soc. London Ser. A},
   volume={351},
   date={1976},
   number={1666},
   pages={407--422},
}
\bib{C01}{article}{
   author={Constantin, Adrian},
   title={On the scattering problem for the Camassa-Holm equation},
   journal={R. Soc. Lond. Proc. Ser. A Math. Phys. Eng. Sci.},
   volume={457},
   date={2001},
   number={2008},
   pages={953--970},
}
\bib{D13}{article}{
   author={Davidson, Melissa},
   title={Continuity properties of the solution map for the generalized reduced Ostrovsky equation},
   journal={J. Differential Equations},
   volume={252},
   date={2013},
   number={6},
   pages={3797--3815},
}
\bib{DP99}{article}{
   author={Degasperis, A.},
   author={Procesi, M.},
   title={Asymptotic integrability},
   conference={
      title={Symmetry and perturbation theory},
      address={Rome},
      date={1998},
   },
   book={
      publisher={World Sci. Publ., River Edge, NJ},
   },
   date={1999},
   pages={23--37},
}
\bib{DZ93}{article}{
   author={Deift, P.},
   author={Zhou, X.},
   title={A steepest descent method for oscillatory Riemann-Hilbert
   problems. Asymptotics for the MKdV equation},
   journal={Ann. of Math. (2)},
   volume={137},
   date={1993},
   number={2},
   pages={295--368},
}
\bib{DB77}{article}{
   author={Dodd, R. K.},
   author={Bullough, R. K.},
   title={Polynomial conserved densities for the sine-Gordon equations},
   journal={Proc. Roy. Soc. London Ser. A},
   volume={352},
   date={1977},
   number={1671},
   pages={481--503},
}
\bib{FT}{book}{
   author={Faddeev, Ludwig D.},
   author={Takhtajan, Leon A.},
   title={Hamiltonian methods in the theory of solitons},
   series={Classics in Mathematics},
   edition={Reprint of the 1987 English edition},
   note={Translated from the 1986 Russian original by Alexey G. Reyman},
   publisher={Springer},
   place={Berlin},
   date={2007},
   pages={x+592},
}
\bib{FO82}{article}{
   author={Fuchssteiner, Benno},
   author={Oevel, Walter},
   title={The bi-Hamiltonian structure of some nonlinear fifth- and
   seventh-order differential equations and recursion formulas for their
   symmetries and conserved covaria},
   journal={J. Math. Phys.},
   volume={23},
   date={1982},
   number={3},
   pages={358--363},
}
\bib{GT09}{article}{
   author={Grunert, Katrin},
   author={Teschl, Gerald},
   title={Long-time asymptotics for the Korteweg-de Vries equation via
   nonlinear steepest descent},
   journal={Math. Phys. Anal. Geom.},
   volume={12},
   date={2009},
   number={3},
   pages={287--324},
}
\bib{HW03}{article}{
   author={Hone, Andrew N. W.},
   author={Wang, Jing Ping},
   title={Prolongation algebras and Hamiltonian operators for peakon
   equations},
   journal={Inverse Problems},
   volume={19},
   date={2003},
   number={1},
   pages={129--145},
}
\bib{H90}{article}{
   author={Hunter, John K.},
   title={Numerical solutions of some nonlinear dispersive wave equations},
   conference={
      title={Computational solution of nonlinear systems of equations (Fort
      Collins, CO, 1988)},
   },
   book={
      series={Lectures in Appl. Math.},
      volume={26},
      publisher={Amer. Math. Soc.},
      place={Providence, RI},
   },
   date={1990},
   pages={301--316},
}
\bib{GHJ12}{article}{
   author={Grimshaw, R. H. J.},
   author={Helfrich, Karl},
   author={Johnson, E. R.},
   title={The reduced Ostrovsky equation: integrability and breaking},
   journal={Stud. Appl. Math.},
   volume={129},
   date={2012},
   number={4},
   pages={414--436},
}
\bib{KM11}{article}{
   author={Khusnutdinova, K. R.},
   author={Moore, K. R.},
   title={Initial-value problem for coupled Boussinesq equations and a
   hierarchy of Ostrovsky equations},
   journal={Wave Motion},
   volume={48},
   date={2011},
   number={8},
   pages={738--752},
}
\bib{KLM11}{article}{
   author={Kraenkel, Roberto A.},
   author={Leblond, Herv\'e},
   author={Manna, Miguel A.},
   title={An integrable evolution equation for surface waves in deep water},
   date={2011},
   eprint={http://arxiv.org/abs/1101.5773v1},
}
\bib{LM06}{article}{
   author={Linares, Felipe},
   author={Milan{\'e}s, Aniura},
   title={Local and global well-posedness for the Ostrovsky equation},
   journal={J. Differential Equations},
   volume={222},
   date={2006},
   number={2},
   pages={325--340},
}
\bib{M06}{article}{
   author={Matsuno, Yoshimasa},
   title={Cusp and loop soliton solutions of short-wave models for the
   Camassa-Holm and Degasperis-Procesi equations},
   journal={Phys. Lett. A},
   volume={359},
   date={2006},
   number={5},
   pages={451--457},
}
\bib{M79}{article}{
   author={Mikhailov, A. V.},
   title={Integrability of a two-dimensional generalization of the Toda chain},
   journal={Pis'ma Zh. Eksp. Teor. Fiz.},
   volume={30},
   date={1979},
   number={7},
   pages={443--448},
}
\bib{MPV99}{article}{
   author={Morrison, A. J.},
   author={Parkes, E. J.},
   author={Vakhnenko, V. O.},
   title={The $N$ loop soliton solution of the Vakhnenko equation},
   journal={Nonlinearity},
   volume={12},
   date={1999},
   number={5},
   pages={1427--1437},
}
\bib{O78}{article}{
   author={Ostrovsky, L. A.},
   title={Nonlinear internal waves in a rotating ocean},
   journal={Oceanology},
   volume={18},
   date={1978},
   number={2},
   pages={181--191},
}
\bib{P93}{article}{
   author={Parkes, E. J.},
   title={The stability of solutions of Vakhnenko's equation},
   journal={J. Phys. A},
   volume={26},
   date={1993},
   number={22},
   pages={6469--6475},
}
\bib{SK74}{article}{
   author={Sawada, Katuro},
   author={Kotega, Takeyasu},
   title={A method for finding $N$-soliton solutions for the KdV equation and KdV-like equation},
   journal={Prog. Theor. Phys.},
   volume={51},
   date={1974},
   number={5},
   pages={1355--1367},
}
\bib{SSK10}{article}{
   author={Stefanov, Atanas},
   author={Shen, Yannan},
   author={Kevrekidis, P. G.},
   title={Well-posedness and small data scattering for the generalized
   Ostrovsky equation},
   journal={J. Differential Equations},
   volume={249},
   date={2010},
   number={10},
   pages={2600--2617},
}
\bib{S06}{article}{
   author={Stepanyants, Y. A.},
   title={On stationary solutions of the reduced Ostrovsky equation:
   periodic waves, compactons and compound solitons},
   journal={Chaos Solitons Fractals},
   volume={28},
   date={2006},
   number={1},
   pages={193--204},
}
\bib{VL04}{article}{
   author={Varlamov, V.},
   author={Liu, Yue},
   title={Cauchy problem for the Ostrovsky equation},
   journal={Discrete Contin. Dyn. Syst.},
   volume={10},
   date={2004},
   number={3},
   pages={731--753},
}
\bib{V92}{article}{
   author={Vakhnenko, V. O.},
   title={Solitons in a nonlinear model medium},
   journal={J. Phys. A},
   volume={25},
   date={1992},
   number={15},
   pages={4181--4187},
}
\bib{V97}{article}{
   author={Vakhnenko, V. O.},
   title={The existence of loop-like solutions of a model evolution equation},
   journal={Ukr. Journ. Phys.},
   volume={42},
   date={1997},
   number={1},
   pages={104--110},
}
\bib{V99}{article}{
   author={Vakhnenko, V. O.},
   title={High-frequency soliton-like waves in a relaxing medium},
   journal={J. Math. Phys.},
   volume={40},
   date={1999},
   number={4},
   pages={2011--2020},
}
\bib{VP98}{article}{
   author={Vakhnenko, V. O.},
   author={Parkes, E. J.},
   title={The two loop soliton solution of the Vakhnenko equation},
   journal={Nonlinearity},
   volume={11},
   date={1998},
   number={6},
   pages={1457--1464},
}
\bib{VP02}{article}{
   author={Vakhnenko, V. O.},
   author={Parkes, E. J.},
   title={The calculation of multi-soliton solutions of the Vakhnenko
   equation by the inverse scattering method},
   journal={Chaos Solitons Fractals},
   volume={13},
   date={2002},
   number={9},
   pages={1819--1826},
}
\bib{VP12}{article}{
   author={Vakhnenko, V. O.},
   author={Parkes, E. J.},
   title={The singular solutions of a nonlinear evolution equation taking
   continuous part of the spectral data into account in inverse scattering
   method},
   journal={Chaos Solitons Fractals},
   volume={45},
   date={2012},
   number={6},
   pages={846--852},
}
\bib{W10}{article}{
   author={Wazwaz, A.-M.},
   title={$N$-soliton solutions for the Vakhnenko equation and its generalized forms},
   journal={Phys. Scr.},
   volume={82},
   date={2010},
   number={6},
   pages={065006, 7 pages},
}
\end{biblist}
\end{bibdiv}
\end{document}